\documentclass[aps,pra,superscriptaddress,twocolumn,nofootinbib]{revtex4-2}

\usepackage{amsmath,amsthm,amsfonts,amssymb,amscd}
\usepackage{indentfirst}
\usepackage{graphicx}
\usepackage{color}
\usepackage[american]{babel}
\usepackage{float}
\usepackage[dvipsnames]{xcolor}
\usepackage[colorlinks=true,citecolor=ForestGreen,linkcolor=magenta,urlcolor=Mulberry,hyperindex]{hyperref}
\usepackage{mathtools}
\usepackage{bm}
\usepackage{bbold}
\usepackage{units}
\usepackage[nameinlink,capitalize]{cleveref}

\usepackage{physics}

\newcommand{\id}{\mathbb{1}}

\newcommand{\Lcal}{\mathcal{L}}

\newcommand{\Ical}{\mathcal{I}}

\newcommand{\Ecal}{\mathcal{E}}
\newcommand{\Tcal}{\mathcal{T}}

\newcommand{\Wcal}{\mathcal{W}}
\newcommand{\Mcal}{\mathcal{M}}
\newcommand{\Xcal}{\mathcal{X}}
\newcommand{\Zcal}{\mathcal{Z}}

\newcommand{\hil}[1]{\mathcal{H}_\text{#1}}
\renewcommand{\tr}[1]{\text{\normalfont Tr}\left(#1\right)}
\newcommand{\kebra}[1]{|#1\rangle \langle #1 |}

\newcommand{\beq}{\begin{equation}}
\newcommand{\eeq}{\end{equation}}

\newtheorem{theorem}{Theorem}
\newtheorem*{theorem*}{Theorem}

\newcommand{\map}[1]{\widetilde{#1}} 

\begin{document}

\title{Post-measurement states are (very) useful for measurement discrimination} 

\author{Charbel Eid}
\email{charbelezzateid@gmail.com}
\affiliation{Sorbonne Université, CNRS, LIP6, F-75005 Paris, France}

\author{Marco Túlio Quintino}
\email{Marco.Quintino@lip6.fr}
\affiliation{Sorbonne Université, CNRS, LIP6, F-75005 Paris, France}

\begin{abstract}
The standard approach to quantum measurement discrimination is to perform the given unknown measurement on a  probe state, possibly entangled with an auxiliary system, and make a decision based on the measurement outcome obtained. In this work, we go beyond the standard aforementioned scenarios by considering not only the classical outcome of a measurement, but also its post-measurement quantum state. More specifically, instead of considering only the positive-operator valued measure (POVM) operators, we consider their associated Lüders instrument. We prove that, when the post-measurement quantum states are available, the task of discriminating two qubit projective measurements is equivalent to discriminating two copies of quantum states associated to each projector pair, extending previous results known for the case where probe states are separable. Then, we proceed by showing that the advantage of considering post-measurement states in measurement discrimination can be large. We formalize this claim by presenting a family of pairs of measurements where the ratio between the discrimination bias of the measurement discrimination task with and without post-measurement states can be arbitrarily large. This shows that, while the post-measurement state was neglected in most of the previous literature, its use can significantly improve the performance of quantum measurement discrimination.
\end{abstract}

\maketitle

\section{Introduction}
In classical physics and information theory, erroneous identification of a system or object is necessarily the consequence of either incomplete access to the system we are observing or an incomplete knowledge of what we are identifying it as. This is not the case in quantum theory, where one may be provided with complete access to a system along with a guarantee that it is one of two possible systems and complete descriptions of both, yet still be unable to correctly identify which of the two they are observing. The tasks of discriminating quantum objects thus form a fundamental problem, with impact and applications in various branches of quantum information and computation~\cite{2015JPhA...48h3001B, 2009AdOP....1..238B,ChannelD_Chiribellaxx_2008,Optimal-environment-localization-pereira-pirandola,Hypothesis-testing-information-theory-blahut,Unambiguous-discrimination-oracle-operators-chefles-twamley,Quantum-estimation-for-quantum-technology-paris,Perfect-discrimination-of-no-signalling-channels-via-quantum-superposition-of-causal-structures-Chiribella,Characterising-memory-in-quantum-channel-discrimination-via-constrained-separabilit-problems-ohst-quintino,Strategies-for-single-shot-discrimination-of-process-matrices-lewandowska-zbigniew-puchala,Skotiniotis2024,Quantum-change-point-gael-sentis-munoz-tapia}.

The discrimination of quantum states has been the first and most widely studied case, and seminal results such as the Helstrom bound were developed before the 70s \cite{Hellstrom_Bound_helstrom_quantum_1969}. State discrimination opened the door for widespread interest in other variations of the problem as well as providing a large amount of tools and methods for approaching them. A related fundamental task we consider in this work is the discrimination of quantum measurements. While measurements are at the core of quantum theory, and significant progress has been made~\cite{MeasD_Sedl_k_2014,MeasD_Pucha_a_2018,MeasD_PhysRevA.80.052102-unambiguous-ziman-sedlak,MeasD_PhysRevA.90.022317,MeasD_Meas_distance_Ji_2006,MeasD-multiple-shot-Von-Neumann-projective-Zbigniew-Puchala,MeasD_with_post-measurement_Manna_2025,MeasD-rank1-Krawiec-Zbigniew-Puchala,MeasD-labeling-Ragini-Ziman,MeasD-unknown-certification-krawiec-Zbigniew-Puchala,MeasD-Observables-Ziman-Heinosaari}, our understanding of measurement discrimination remains limited when compared to state discrimination.
In particular, apart from the recent work done in~\cite{MeasD_with_post-measurement_Manna_2025}, previous works on measurement discrimination typically focus on POVMs and classical outcomes, without analyzing the role of the associated post-measurement states.

In this work, we investigate the role of post-measurement quantum states in the task of minimum-error measurement discrimination. We introduce the problem of Lüders instrument discrimination, in which the post-measurement state associated with each POVM element is accessible, and formulate it within the framework of quantum channel discrimination. We first prove that in the dichotomic case, for qubit projective measurements, Lüders instrument discrimination is equivalent to the discrimination of two-copy pure states, thereby extending previous results that considered strategies without entanglement~\cite{MeasD_with_post-measurement_Manna_2025} and showing that access to post-measurement states strictly enhances discrimination performance. We further demonstrate that this enhancement can be arbitrarily large: even for qubit measurements, we construct explicit families of POVMs for which the ratio between the optimal discrimination biases with and without post-measurement states diverges. Finally, we complement our analytical results with a semidefinite-programming based numerical investigation of more general scenarios, illustrating the operational relevance of post-measurement states beyond analytically tractable cases.

\section{The measurement discrimination task}
\begin{figure}[ht!]
   \centering
    \includegraphics[width=0.5\textwidth]{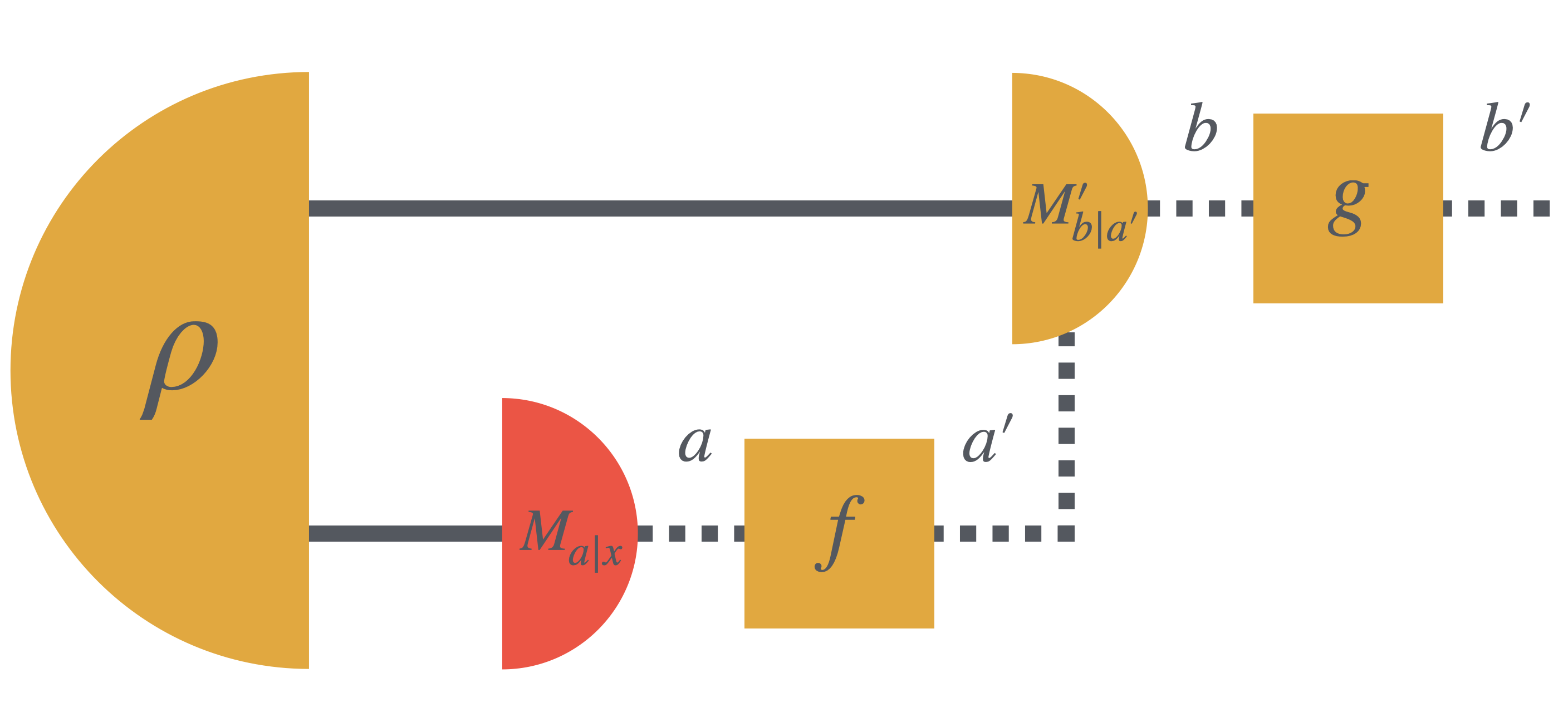}
    \caption{The most general strategy for measurement discrimination without the post-measurement state.
    In yellow: elements of the strategy (free variables to be optimized over). In red: unknown objects being discriminated (fixed parameters that define the discrimination problem). $M_{a|x} \in \Mcal_x$ is the POVM element with outcome $a$. $M'_{b|a'} \in \Mcal'_{a'}$ is the POVM element with outcome $b$. $\rho$ is a bipartite input state. For an outcome $a$ of $\Mcal_x$, $f(a)=a'$ where $f$ is a function that decides which measurement $\Mcal'_{a'}$ is to be applied. For an outcome $b$ of $\Mcal'_{a'}$, $g(b)=b'$ decides the final guess of the strategy.}
    \label{MeasDisc}
\end{figure}
The task of minimum-error one-shot measurement discrimination can be defined as follows \cite{MeasD_Sedl_k_2014, MeasD_Pucha_a_2018, MeasD_PhysRevA.80.052102-unambiguous-ziman-sedlak, MeasD_PhysRevA.90.022317}. With probability $p_x$, Alice is given an unknown quantum measurement $\Mcal_x:=\{M_{a|x}\}_{a=1}^{n_x}$, mathematically described by a positive operator-valued measure (POVM), a set of operators $\{M_{a|x}\}_{a=1}^{n_x}$, which respect $M_{a|x} \geq 0 \,  \forall a,x$ and $\sum_a^{n_x} M_{a|x} = \id \, \forall  x$. In other words, a POVM $\Mcal_x$ is randomly drawn from an ensemble $\Ecal=\{p_x,\Mcal_x\}_{x=1}^N$ which is known to Alice. Being allowed to use the measurement $\Mcal_x$ once, her task is to determine which measurement she received, by inputting a quantum state once into the measurement, observing the outcome and guessing the (classical) value of $x\in\{1,\ldots,N\}$.

The most general strategy for one-shot measurement discrimination (see \cref{MeasDisc}) can be described as follows \cite{MeasD_Sedl_k_2014}: Alice inputs a bipartite state $\rho \in \Lcal(\hil{in}\otimes\hil{aux})$, where $\hil{in}$ is the space over which the measurement acts, and $\hil{aux}$ is an auxiliary space which may be used to improve performance. Then, based on the outcome $a$ of the measurement $\Mcal_x$, she performs a measurement $\Mcal'_{a'}$, chosen from a predefined set of possible measurements $\{\Mcal'_{a'}\}_{a'=1}^{N'}$, over the auxiliary space, and finally, depending on the outcome $b$ of that measurement, she chooses an index $b'$, which will be her guess. If we define $f(a)=a'$ as the function that maps a measurement outcome $a$ of $\Mcal_x$ to a choice of measurement $\Mcal'_{a'}$, $g(b)=b'$, the function that maps a measurement outcome $b$ of $\Mcal'_{a'}$ to a guessed index $b'$, and $\Mcal_x =\{ M_{a|x}\}_{a=1}^{n_x}$, $\Mcal'_{a'} =\{ M'_{b|a'}\}_{b=1}^{n'_{a'}}$, then we can write the probability of Alice correctly guessing the value of $x$, which happens for $b' = x$:
\begin{align}
\begin{split}
    p_{\text{succ}} =& \sum_{x,a,a',b,b'}  p_x \; \delta_{a', f(a)} \delta_{b', g(b)} \delta_{x, b'} \tr{ \rho \left(M_{a|x} \otimes   M'_{b|a'}\right) }
\end{split} \\
=& \sum_b  p_{g(b)} \sum_a \tr{\rho  \left(M_{a|g(b)} \otimes M'_{b|f(a)}\right) }.
\end{align}
As discussed in \cref{methods}, since the functions $f$ and $g$ are simple classical preprocessing, without loss in performance, we may always set $f$ and $g$ as the identity function.

Finally, we note that, while our work focuses on single-shot minimum-error discrimination, other relevant figures of merit for measurement discrimination include unambiguous discrimination~\cite{MeasD_PhysRevA.80.052102-unambiguous-ziman-sedlak,MeasD_Pucha_a_2018}, multiple-shot discrimination~\cite{MeasD-multiple-shot-Von-Neumann-projective-Zbigniew-Puchala}, measurement certification~\cite{MeasD-unknown-certification-krawiec-Zbigniew-Puchala}, and measurement labeling~\cite{MeasD-labeling-Ragini-Ziman}.

\section{Quantum Instruments: measurements with post-measurement states}

In quantum theory, it is not necessarily the case that quantum measurements destroy the quantum system. There are cases where, in addition to a classical outcome, one has access to a post-measurement quantum state. For instance, after performing a projective measurement on some orthonormal basis $\{\phi_a\}_a$, when the outcome $a$ is obtained, one may have access to a post-measurement state given by $\ket{\phi_a}$. More generally, after performing a quantum measurement with the POVM $\Mcal=\{M_a\}_{a=1}^n$ on the state $\rho$, one may have access to classical outcomes $a$ and post-measurement states 
\begin{align} \label{Eq:Luders1}
    \rho_a:= \frac{\sqrt{M_a} \rho \sqrt{M_a}^\dagger} { \tr{M_a \rho}},
\end{align}
where $\sqrt{M_a}$ are the unique positive semidefinite square roots of $M_a$, often called the measurement operators, or the Kraus operators of the quantum measurement ~\cite{Nielsen_Chuang_2010}.

This more general way to define quantum measurements can be formalized by quantum instruments. A quantum instrument \cite{Heinosaari_Ziman_2011, Naimark_Dilation_Modern_Quantum_Instruments_Earliest_POVM_Earliest_davies_operational_1970,Watrous_2018} is a set of trace non-increasing, completely positive (CP) linear maps $\Ical:=\{\map{C}_a\}_{a=1}^{n}$, such that their sum $\sum_{a=1}^n \map{C}_a$ is a quantum channel, that is, a completely positive trace preserving map (CPTP). Quantum instruments represent probabilistic quantum transformations and, thus, are the right object to model the transformation a measurement applies to a state. Given an instrument $\Ical$ and an input state $\rho$, the probability of map $\map{C}_a$ being applied is $p(a|\rho,\Ical) = \tr{\map{C}_a(\rho)}$, with $\sum _a^n\tr{\map{C}_a (\rho)} = 1$ by construction. Each instrument corresponds to a unique POVM, which can be proven to have its elements given by $M_a=\map{C_a}^\dagger(\id)$, where $\map{C_a}^\dagger$ is the adjoint map of $\map{C_a}$. On the other hand, for each measurement $\Mcal$, there are various sets of instruments $\map{C_a}$ such that $\tr{\map{C}_a (\rho)} = \tr{M_a \rho}  \,\, \forall \rho,a$. \cite{Heinosaari_Ziman_2011}

The instrument corresponding to \cref{Eq:Luders1} is known as Lüders' instrument, and it has various physical and mathematical properties \cite{Heinosaari_Ziman_2011}. The Lüders instrument of a POVM $\Mcal =\{M_a\}_{a=1}^n$ is $\Lcal:=\{\map{L}_a\}_{a=1}^{n}$, $\map{L}_a(\rho)= \sqrt{M_a}\rho \sqrt{M_a}$. The probability of a map $\map{L}_a$ being applied here is simply $p(a|\rho,\Lcal)=\tr{\map{L}_a(\rho)} = \tr{M_a \rho}$, correctly reproducing the probability distribution given by the measurement. The quantum channel associated with each of the linear maps $\map{C}_a$ of a quantum instrument $\Ical$ is $\rho_a = \map{C}_a(\rho)/p(a|\rho,\Ical)$ or for Lüders' instruments $\map{L}_a(\rho)/p(a|\rho,\Lcal)$.

Due to their practical and mathematical relevance~\cite{Heinosaari_Ziman_2011}, in this work, we will restrict our analysis to Lüders' instruments. We notice that, if we consider the task of measurement discrimination with arbitrary instruments, one can artificially make this task trivial by choosing some specific instruments. For instance, for any set of POVMs $\{M_{a|x}\}_{a,x}$, we may define the instruments $\map{C}_{a|x}(\rho)=\tr{M_{a|x}\rho}\ketbra{x}{x}$, which performs the measurement $x$ and output ``orthogonal flag-states'' $\ketbra{x}{x}$, which allow us to trivially identify which measurement was performed.

 \begin{figure}[ht!]
    \centering
    \hspace{30pt}
    \includegraphics[width=0.5\textwidth]{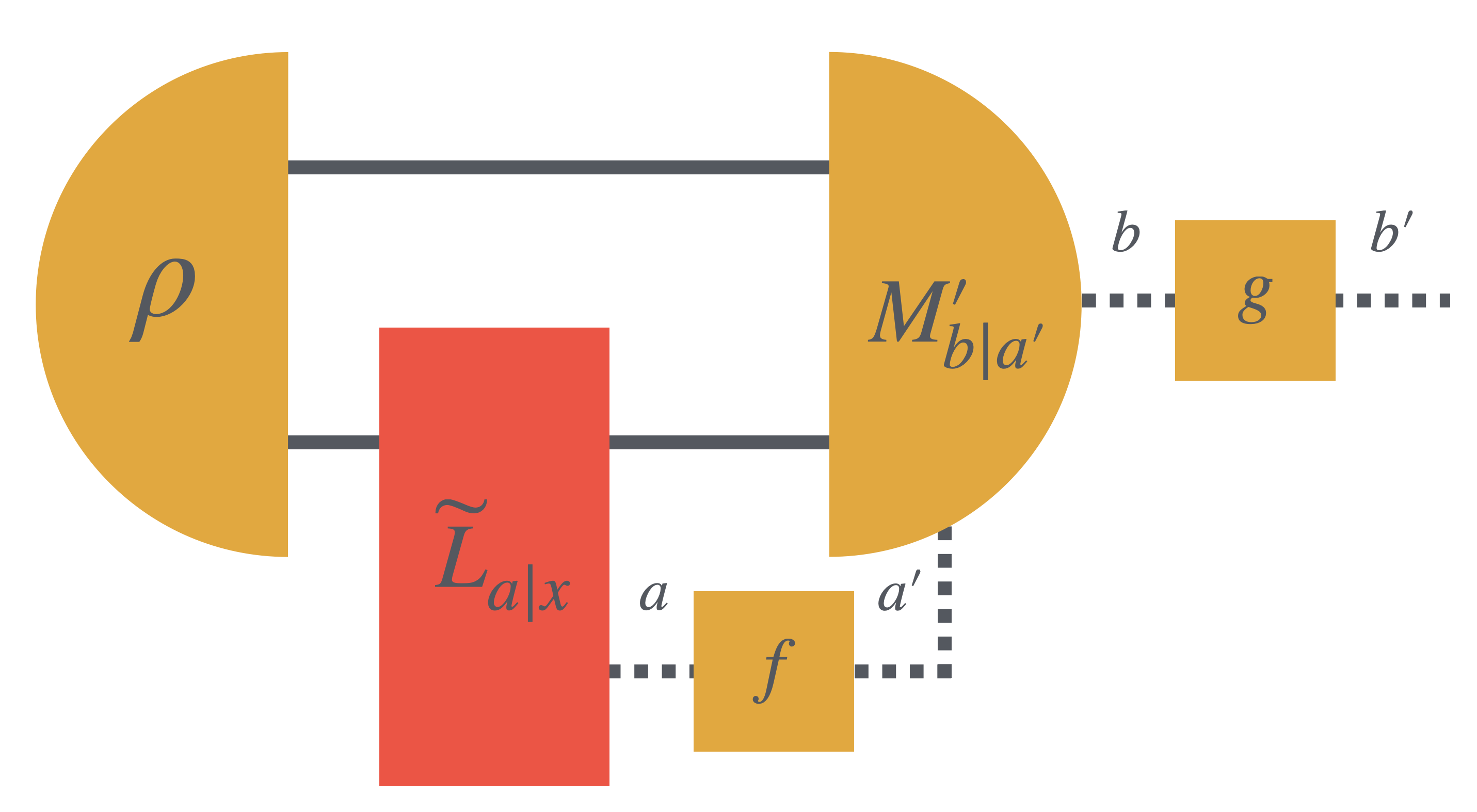}
    \caption{Most general strategy for instrument discrimination in circuit notation. In yellow: elements of the strategy (Free variables to be optimized over). In red: unknown objects being discriminated (fixed parameters that define the discrimination problem). $\map{L}_{a|x} \in \Lcal_x$ is the instrument element with outcome $a$. $M'_{b|a'} \in \Mcal'_{a'}$ is the POVM element with outcome $b$. $\rho$ is a bipartite input state. For an outcome $a$ of $\Lcal_x$, $f(a)=a'$ decides which measurement $\Mcal'_{a'}$ is to be applied. For an outcome $b$ of $\Mcal'_{a'}$, $g(b)=b'$ decides the final guess of the strategy.}
    \label{InstDisc}
\end{figure}

 \section{Instrument discrimination: Measurement discrimination when post-measurement states are available}
We define the problem of minimum-error, one shot Lüders instrument discrimination as follows: with probability $p_i$, Alice is given an unknown Lüders' instrument $\Lcal_x = \{\map{L}_{a|x}\}_{a=1}^{n_x}:\Lcal(\hil{in})\to \Lcal(\hil{out})$ that is drawn from an ensemble $\Ecal=\{p_x,\Lcal_x\}_{x=1}^N$ which is known to her. Being allowed to use the Lüders' instrument $\Lcal_x$ once, her task is to determine which instrument she received, by inputting states into the Lüders' instrument, observing the index of the map $\map{L}_{a|x}$ that has been applied, performing operations on the output state, and guessing the value of $x\in\{1,\ldots,N\}$. 

The most general strategy (see \cref{InstDisc}) now becomes the following: Alice inputs a bipartite state $\rho \in \Lcal(\hil{in}\otimes\hil{aux})$, where $\hil{in}$ is the input space of the Lüders' instrument and $\hil{aux}$ an auxiliary space. Then, based on the classical outcome $a$ of the instrument $\Lcal_x$, she performs a joint measurement $\Mcal'_{a'}$, chosen from a predefined set of possible measurements $\{\Mcal'_{a'}\}_{a'=1}^{N'}$,  over $\hil{out}\otimes\hil{aux}$, the auxiliary and output spaces, and finally, depending on the outcome $b$ of that measurement, she chooses an index $b'$, which will be her final guess. If again, we define $f(a)=a'$ as the function that maps a classical outcome $a$ of $I_x$ to a choice of measurement $\Mcal'_{a'}$, and $g(b)=b'$, the function that maps a measurement outcome $b$ of $\Mcal'_{a'}=\{ M'_{b|a'}\}_{b=1}^{n'_{a'}}$ to a guessed index $b'$, then we can write the probability of Alice correctly guessing the value of $x$, which is a sum over the probabilities of the events where $b' = x$:
 \begin{align}
 \begin{split}
      p_{\text{succ}} = \sum &p_x \delta_{a',f(a)} \delta_{b',g(b)} \delta_{x,b'} \\ & \tr{(\map{L}_{a|x} \otimes \id_{\text{aux}})(\rho) M'_{b|a'}}.
 \end{split}
 \end{align}
As in the measurement discrimination case, and as discussed in \cref{methods}, since the functions $f$ and $g$ are simple classical preprocessing, without loss in performance, we may always set $f$ and $g$ as the identity function.
 
For any ensemble of POVMs $\Ecal = \{p_x,\Mcal_x\}_{x=1}^N$, one can always achieve at least the same success probability for the Lüders instrument discrimination task as for the measurement discrimination task by applying the same strategy on the classical outcomes and auxiliary spaces ignoring the quantum output space. The main question of this work is to check when the post-measurement state is useful, and how useful it is. 

\section{Methods: quantum testers and channel discrimination}
\label{methods} 
In order to simplify our analysis, it is convenient to represent quantum measurements and quantum instruments as quantum channels, as done in \cite{MeasD_Sedl_k_2014}. In this way, the problem of quantum instrument discrimination is reduced to a quantum channel discrimination task. The problem of minimum-error one shot channel discrimination is the following \cite{Watrous_2018}: Alice is given an unknown quantum channel (CPTP map) $\map{C}_x:\Lcal(\hil{in})\to\Lcal(\hil{out})$, drawn from an ensemble $\Ecal=\{p_x,\map{C}_x\}_{x=1}^N$ that is known to her. After being allowed to use the channel $\map{C}_x$ once, her task is to determine which channel she received, by performing operations on it and guessing the value of $x\in\{1,\ldots,N\}$.

In order to analyze discrimination between quantum channels, it is convenient to represent channels via the Choi-Jamiołkowski isomorphism \cite{Choi_operators_CHOI1975285,Choi_operators_Jamiolkowski:1972pzh}. The Choi operator of a linear map $\map{C}:\Lcal(\hil{in})\to\Lcal(\hil{out}) $ is defined as $C  := (\id \otimes \map{C})(\kebra{\Phi^+}) \in \Lcal(\hil{in} \otimes \hil{out})$, where $\ket{\Phi^+} = \sum_{j=0}^{d-1} \ket{j} \otimes \ket{j}$ is a unnormalized maximally entangled state. A linear map $\map{C}$ is completely positive (CP) if and only if $C\geq0$, and trace preserving (TP) if and only if $\Tr_{\text{out}}(C) = \id_{\text{in}}$. We now thus have a one-to-one mapping between linear maps and Choi operators as well as strict conditions for identifying Choi operators that correspond to quantum channels. 

Being able to now treat quantum channels as matrices, we introduce process POVMs (PPOVMs)~\cite{Testers_Ziman_2008}, also known as quantum testers \cite{Testers_Link_Product_Chiribella_2009}. Testers are the analogous of POVMs to quantum channels, and may be viewed as a higher-order version of POVMs \cite{Higher_Order_taranto2025higherorderquantumoperations}, since they act as measurement devices for quantum channels. A tester is a set of linear operators $\Tcal:=\{T_b\}_{b=1}^N$, $T_b \in \Lcal(\hil{in} \otimes \hil{out})$, satisfying,  $T_b \geq0\,\forall b$ and $\sum_b^N T_b = \sigma \otimes \id_{\normalfont{out}}$, where $\sigma \in \Lcal(\hil{in})$ is a quantum state.  It holds that $p(b|\map{C},\Tcal) = \Tr(T_b C)$ is the probability of obtaining outcome $b$ for a channel discrimination task where we use a strategy with corresponding tester $\Tcal$, knowing the correct channel is $\map{C}$. Moreover, one can prove that every tester $\Tcal$ admits a physical realization as an admissible channel discrimination strategy, as well as that every possible strategy admits a valid tester, making the study of channel discrimination extensively analyzable through the lens of testers without loss of either realizability or attainability \cite{Testers_Ziman_2008}.

\begin{figure}[ht!]
   \centering
    \includegraphics[width=0.5\textwidth]{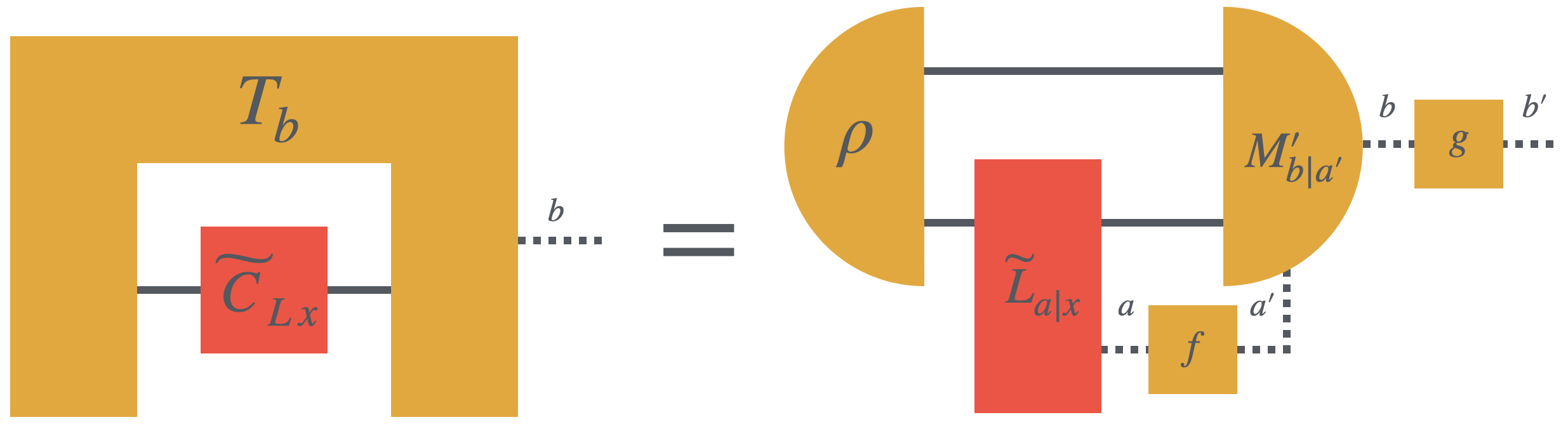}
    \caption{Instrument discrimination as a channel discrimination task.
    In yellow: elements of the strategy (free variables to be optimized over). In red: unknown objects being discriminated (Fixed parameters that define the discrimination problem). To the right of the reader, the most general instrument discrimination strategy, to the left the equivalent channel discrimination strategy that uses testers.}
    \label{ChanDisc}
\end{figure}

To address the problem of Lüders' instrument discrimination using this formalism, we can express the Lüders instrument $\Lcal =\{\map{L}_a\}_{a=1}^n$ of a measurement $\Mcal =\{M_a\}_{a=1}^n$ as a channel $\map {C_L(}\rho): \Lcal(\hil{in}) \to \Lcal(\hil{out1} \otimes \hil{out2})$, which for an input $\rho$ outputs the equivalent of a classical index $a$ on the second space, by preparing the state $\kebra a$ (measure-and-prepare), and the post-measurement state on the first space \cite{Higher_Order_taranto2025higherorderquantumoperations}:

\begin{align}
    \map {C_L(}\rho) :=& \sum_{a=1}^n \map{L}_a(\rho) \otimes \kebra{a}
    \\ =&\sum_{a=1}^n \sqrt{M_a}\rho \sqrt{M_a} \otimes \kebra{a}.
    \label{ludchan}
\end{align}

Under the assumption that all instruments being discriminated have the same number of outcomes, the task of discriminating an ensemble $\Ecal=\{p_x,\Lcal_x\}_{x=1}^N$ of instruments is equivalent to the discrimination of their respective channels $\Ecal'=\{p_x,\map{C_L}_x\}_{x=1}^N$. The most general strategy then simply becomes to apply a tester $\Tcal$ to the given channel and have the outcome of the tester be our guess. One could argue that we could also use a function to post process the classical output of the tester, but for any tester $\Tcal$ followed by a function $f$, there always exists a tester $\Tcal'$ and a function $f'$ that output the same guesses, making the post processing redundant. Denoting with ${C_L}_x$ the Choi operator of the channel $\map {C_L}_x(\rho)$, the success probability of the instrument discrimination task then becomes:
\begin{align}
    p_{\text{succ}} = \sum_{x=1}^N p_x p(x|\map{C_L}_x, \Tcal) =\sum_{x=1}^N p_x \tr{{C_L}_x T_x}
\end{align}
This form of the problem is particularly interesting, because the function for $p_{\text{succ}}$ has now become linear in $T_x$ and can be numerically maximized via semidefinite programming   \cite{ChannelD_Bavarescooo_2021, 2022JMP....63d2203B}.

It is noteworthy that in the dichotomic discrimination case, the channel discrimination formulation of Lüders instrument discrimination can be used to define an alternate notion of distance between quantum measurements which does take into account the post measurement state.  Let $\map{C_L}_1$ and $\map{C_L}_2$ be the quantum channels of the Lüders instruments corresponding to measurements $\Mcal_1$ and $\Mcal_2$ as defined in \cref{ludchan}. We thus define the Lüders distance between two POVMs $\Mcal_1$ and $\Mcal_2$:
\begin{align}
\label{ludist}
    d_L(\Mcal_1,\Mcal_2) := \norm{\map{C_L}_2 - \map{C_L}_1}_\diamond
\end{align}
where $\norm{.}_\diamond$ is the diamond norm (completely bounded trace norm) \cite{Diamond_norm_aharonov1998quantumcircuitsmixedstates,Watrous_2018}, defined as 
\begin{align}
    \norm{\map{C}}_\diamond := \max_{\rho_\text{in,aux}} \norm{\map{C}\otimes \map{\id_\text{aux}}(\rho_{\text{in,aux}})}_1, 
\end{align}
where $\rho\in \mathcal{L}(\mathcal{H}_\text{in}\otimes \mathcal{H}_\text{aux})$ is a quantum state and $\mathcal{H}_\text{aux}$ is an arbitrary auxiliary linear space, and $\norm{A}_1:\Tr(\sqrt{A A^\dagger})$ is the one norm. Observe that the Lüders distance between two POVMs $d_L$ shares the same properties of the diamond norm for channels and it is indeed a mathematical distance (metric). Since the optimal success probability $p_C$ for discriminating two quantum channels (with equal prior probabilities) $\map{C_1}$ and $\map{C_2}$ is $p_C=\frac{1}{2} + \frac{1}{4} \norm{\map{C_2}-\map{C_1}}_\diamond$ \cite{Watrous_2018}, we may then use the above definition to write the optimal discrimination success probability for two measurements with access to the post-measurement state as:
\begin{align}
    p_{\text{succ}} = \frac{1}{2}+ \frac{d_L(\Mcal_1,\Mcal_2)}{4}.
\end{align}

We will also note $d_M(\Mcal_1,\Mcal_2)$, the distance between two measurements $\Mcal_1$ and $\Mcal_2$ defined in an analogous way. Consider a POVM $\Mcal=\{M_a\}_{a=1}^n$ with the measure-and-prepare channel channel
\begin{align}
\map{\Mcal}(\rho)= \sum_{a=1}^n \tr{M_a \rho} \kebra a.    
\end{align}
We now define the distance between two measurements $\Mcal_1$ and $\Mcal_2$ (without the post-measurement state) as
\begin{equation} \label{eq:d_M}
    d_M(\Mcal_1,\Mcal_2):=\norm{\map{M}_2 - \map{M}_1}_\diamond,
\end{equation} then, the optimal discrimination success probability for two measurements without access to the post-measurement state is
    $p_{\text{succ}} = \frac{1}{2}+ \frac{d_M(\Mcal_1,\Mcal_2)}{4}$.

\section{Main result 1: analytic solution for projective qubit instruments and optimal strategies}

\subsection{Success probability}

We now present our first main result which considers the simple instance of a pair of projective qubit Lüders instruments. We give a closed formula which takes the from of a Helstrom bound for the discrimination of two-copy pure states.

\begin{theorem} \label{thm:main1}
    In the context of one-shot minimum-error discrimination, the task of discriminating two projective qubit Lüders instruments characterized by POVMs $\{\kebra{\psi}, \kebra{\psi^\perp} \}$\footnote{Here, $\ket{\psi^\perp}$ is any quantum state which is orthogonal to $\ket{\psi}$, hence, we have that $\ketbra{\psi^\perp}=\id - \ketbra{\psi}$.} and $\{\kebra{\phi}, \kebra{\phi^\perp} \}$, and occurring with respective probabilities $p_{\psi}$ and $p_{\phi}$, is mathematically equivalent to the discrimination of the pure states $\kebra{\psi}^{\otimes2}$ and $\kebra{\phi}^{\otimes2}$ occurring with the same respective probabilities. Hence, the optimal success probability is given by $ \max p_s =\frac{1}{2}\left(1+\sqrt{1-4p_{\psi}p_{\phi}|\langle\psi \psi|\phi \phi\rangle|^2}\right)$.

    Moreover, the maximal performance is attained without the use of entanglement.
\end{theorem}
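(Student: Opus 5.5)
The plan is to work directly with Choi operators and testers, as in \cref{methods}. The bound will split into two halves: an achievability part using a product (unentangled) probe, and a matching converse over all testers.

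\textbf{Step 1: block structure of the Lüders channels.} For a rank-one projective qubit POVM $\{\kebra{\psi_a}\}_{a=0,1}$ (with $\psi_0=\psi$, $\psi_1=\psi^\perp$), \cref{ludchan} reads
\[
\map{C_L}(\rho)=\sum_a \langle\psi_a|\rho|\psi_a\rangle\,\kebra{\psi_a}\otimes\kebra{a}.
\]
Its Choi operator is therefore block diagonal in the classical flag:
\[
C_L=\sum_a \kebra{\bar\psi_a}\otimes\kebra{\psi_a}\otimes\kebra{a},
\]
where $\bar\psi_a$ denotes complex conjugation in the computational basis. In each block $a$ the two hypotheses contribute the pure vectors $\ket{\bar\psi_a\psi_a}$ and $\ket{\bar\phi_a\phi_a}$. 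For qubits, $|\langle\psi^\perp|\phi^\perp\rangle|=|\langle\psi|\phi\rangle|$, so both blocks have overlap modulus $c:=|\langle\psi|\phi\rangle|^2=|\langle\psi\psi|\phi\phi\rangle|$. This is exactly the overlap of the two-copy states. Since the tester outcomes $T_x$ may be restricted to be block diagonal in the flag without loss, the success probability splits as
\[
p_s=\sum_a\Big[p_\psi\,\mathrm{Tr}\big(\kebra{\bar\psi_a\psi_a}T_1^{(a)}\big)+p_\phi\,\mathrm{Tr}\big(\kebra{\bar\phi_a\phi_a}T_2^{(a)}\big)\Big],
\]
subject to $T_1^{(a)}+T_2^{(a)}=\sigma\otimes\id$ for every $a$, with one common state $\sigma$.

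\textbf{Step 2: achievability without entanglement.} Take $\sigma=\id/2$, which is realised by feeding in the maximally mixed state, or equivalently a uniformly random basis state, with no auxiliary entanglement. The effective problem then becomes: with probability $1/2$ each flag $a$ occurs, and conditioned on $a$ one must discriminate two pure states of overlap $c$ with priors $p_\psi,p_\phi$. The post-measurement state is $\psi_a$, and the input contributes the "conjugate" copy. Applying the Helstrom measurement in each block gives $\frac12\big(1+\sqrt{1-4p_\psi p_\phi c^2}\big)$. This matches the claimed value and simultaneously proves the "no entanglement needed" clause.

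\textbf{Step 3: converse.} For a general $\sigma$, set $\tilde\sigma=\sigma^{T}\otimes\id$ (transpose as dictated by the Choi convention) and rescale $S_x=\tilde\sigma^{-1/2}T_x\tilde\sigma^{-1/2}$, so that $S_1+S_2=\id$ is an ordinary POVM. Each block then becomes a weighted two-pure-state discrimination, with unnormalised vectors $w_{x}^{(a)}=\tilde\sigma^{1/2}\ket{\bar\psi_a\psi_a}$ (respectively $\phi$) carrying weights $p_\psi,p_\phi$. The weighted Helstrom formula gives block value $\frac12 g(s_a,z_a)$, where
\[
g(s,z)=s+\sqrt{s^2-4|z|^2},\qquad s_a=p_\psi\|w_1^{(a)}\|^2+p_\phi\|w_2^{(a)}\|^2,\qquad z_a=\sqrt{p_\psi p_\phi}\,\langle w_1^{(a)}|w_2^{(a)}\rangle.
\]
Since $g$ is concave and positively 1-homogeneous, it is superadditive, so $\sum_a g(s_a,z_a)\le g\big(\sum_a s_a,\sum_a|z_a|\big)$. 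Moreover $\sum_a s_a=1$, because $\sum_a|\psi_a\rangle\langle\psi_a|=\id$ and $\mathrm{Tr}\,\sigma=1$. Because $g$ is decreasing in $|z|$, it remains to show
\[
\sum_a|z_a|=\sqrt{p_\psi p_\phi}\,|\langle\psi|\phi\rangle|\sum_a\big|\langle\psi_a|\sigma'|\phi_a\rangle\big|\ \ge\ \sqrt{p_\psi p_\phi}\,c
\]
for every state $\sigma'$, where $\sigma'$ is $\sigma$ up to conjugation. That is, one needs $\sum_a|\langle\psi_a|\sigma'|\phi_a\rangle|\ge|\langle\psi|\phi\rangle|$.

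\textbf{Main obstacle.} I expect this last inequality to be the hardest part, and its validity has not yet been checked. My first attempt would be as follows. Use the freedom in the phases of $\psi^\perp,\phi^\perp$: write $\phi=\alpha\psi+\beta\psi^\perp$ and $\phi^\perp=-\bar\beta\psi+\bar\alpha\psi^\perp$. Then apply the triangle inequality, choosing a phase $e^{i\theta}$ so that
\[
\big|\langle\psi|\sigma'|\phi\rangle+e^{i\theta}\langle\psi^\perp|\sigma'|\phi^\perp\rangle\big|
\]
equals $|\alpha|\,\mathrm{Tr}\,\sigma'$ plus off-diagonal terms. The aim is to show these off-diagonal terms can be made to cancel, or shown not to decrease the modulus. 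If this direct approach fails, the fallback is a symmetrisation argument. Twirl $\sigma$ over the unitaries that swap $\psi\leftrightarrow\psi^\perp$ while simultaneously mapping $\phi\leftrightarrow\phi^\perp$ (up to phase); such unitaries exist for qubits. The success probability is invariant under this twirl and concave in the tester, so one may assume $\sigma$ has this symmetry. For such symmetric $\sigma$ the off-diagonal terms cancel, which reduces the converse to the $\sigma=\id/2$ case computed in Step 2.
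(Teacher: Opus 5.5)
Your Step~2 does not establish the ``without entanglement'' clause, and the realisation it proposes is wrong. As a tester-level statement the step is fine: a tester with marginal $\sigma=\id/2$ and block-wise Helstrom effects does attain the claimed value. However, a tester's value is not fixed by its marginal. Those Helstrom effects are generally entangled operators on the span of $\ket{\bar\psi_a\psi_a}$ and $\ket{\bar\phi_a\phi_a}$, and physically the ``conjugate copy'' is the auxiliary half of a maximally entangled probe (cf.\ Eq.~\eqref{eq:psi_psi_conj}). If you feed in the maximally mixed state with no auxiliary system, the tester is forced to be $T_x^{(a)}=\tfrac12\id\otimes E_x^{(a)}$. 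The flag is then uniformly distributed under both hypotheses, and you are left discriminating $\psi_a$ from $\phi_a$ with a single copy. That is exactly the post-measurement-agnostic value ($\cos^2(\pi/8)\approx0.854$ instead of $0.933$ for $\Zcal$ vs $\Xcal$). A recorded uniformly random basis state is not equivalent either. It gives $T_x^{(a)}=\tfrac12\sum_k\kebra{\bar e_k}\otimes E_x^{(a,k)}$, a measurement of the conjugate copy in one fixed basis. That basis must be common to both blocks, because the input is chosen before $a$ is known. In the computational basis this gives $\approx0.905$ for $\Zcal$ vs $\Xcal$.

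The paper handles this with the flip symmetry plus the sequential-local optimality of two-copy pure-state discrimination (Ac\'{\i}n et al.). Your own framework offers a direct fix:
\begin{itemize}
\item A rank-one $\sigma$ forces $T_x^{(a)}=\sigma\otimes E_x^{(a)}$, i.e.\ a product-probe strategy.
\item Your Step~3 gives the exact optimum $\tfrac12\sum_a g(s_a,z_a)$ for each fixed $\sigma$.
\item So it suffices to exhibit a pure $\sigma$ saturating both of your inequalities.
\end{itemize}
Take $\sigma$ pure on the shorter Bloch great-circle arc from $\phi$ to $\psi^\perp$. Along this arc, $\sum_a|\langle\psi_a|\sigma|\phi_a\rangle|=|\langle\psi|\phi\rangle|$ holds with equality. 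The proportionality $(s_0,|z_0|)\propto(s_1,|z_1|)$ needed for tight superadditivity holds somewhere on the arc by continuity: $z_1=0$ at $\sigma=\phi$ and $z_0=0$ at $\sigma=\psi^\perp$. For equal priors that point is the bisector.

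Step~3 is a valid converse, genuinely different from the paper's, and the inequality you left unchecked follows from your first attempt. Write $\alpha=|\langle\psi|\phi\rangle|$, $\phi=\alpha\psi+\beta\psi^\perp$, $\phi^\perp=-\bar\beta\psi+\alpha\psi^\perp$, and let $\sigma_{ij}$ be the entries of $\sigma'$ in the basis $\{\psi,\psi^\perp\}$. Then
\[
\langle\psi|\sigma'|\phi\rangle+\langle\psi^\perp|\sigma'|\phi^\perp\rangle=\alpha\,\mathrm{Tr}\,\sigma'+\beta\sigma_{01}-\overline{\beta\sigma_{01}}=\alpha+2i\,\mathrm{Im}(\beta\sigma_{01}),
\]
whose modulus is at least $\alpha$, and the triangle inequality finishes the argument.

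Your fallback is misstated, however. The unitary twirl is a $\pi$-rotation, and it leaves $\sigma$ free to keep a Bloch component along the rotation axis. So the off-diagonal terms need not cancel; they only help. The paper instead twirls with the antiunitary universal NOT $\Gamma(A)=\mathrm{Tr}(A)\id-A=\sigma_yA^T\sigma_y$ on both qubit factors, together with the flag flip. This forces $\sigma=\id/2$ and collapses every tester to a two-qubit POVM $N_b=2\sigma_{1|b}$, which gives the equivalence with two-copy discrimination structurally. Your route (block-wise weighted Helstrom, superadditivity of $g$, an overlap inequality) is more computational. Its advantage is that it gives the optimum for every input marginal, which is what makes the no-entanglement clause accessible as described above.
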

We note that \cref{thm:main1} extends the results from \cite{MeasD_with_post-measurement_Manna_2025}. More precisely, Thm. 2 from \cite{MeasD_with_post-measurement_Manna_2025} considers the task of discriminating two dichotomic qubit Lüders instruments without entanglement, and shows that the optimal performance is given by $\max p_s =\frac{1}{2}\left(1+\sqrt{1-4p_{\psi}p_{\phi}|\langle\psi \psi|\phi \phi\rangle|^2}\right)$. Here, we show that entanglement cannot be used to improve performance. Our proof follows different steps from \cite{MeasD_with_post-measurement_Manna_2025}, of which we only became aware after the completion of this work.

Using \cref{thm:main1}, we see that, apart from the case where the measurements can be perfectly discriminated, having access to the post measurement state always improves the optimal success probability compared to the case where we do not use the post-measurement state. Indeed, the optimal discrimination of a pair of projective qubit measurements, as shown in \cite{MeasD_Sedl_k_2014}, is equivalent to one copy pure state discrimination and has success probability $ \max p_s' =\frac{1}{2}\left(1+\sqrt{1-4p_{\psi}p_{\phi}|\langle\psi |\phi \rangle|^2}\right)$. 

Take for instance the case of discriminating the projective measurements $\Zcal = \{\kebra0,\kebra1\}$ and $\Xcal =\{\kebra+,\kebra-\}$, occurring with equal probabilities, and where $\ket+ =\frac{1}{\sqrt2}\left(\ket0+\ket1\right)$. With no access to the post-measurement state, this is equivalent to the discrimination of $\ket0$ and $\ket+$ and has maximum success probability $p=\frac{1}{2}\left(1+\sqrt{1-|\bra0 +\rangle|^2}\right) = \cos^2(\pi/8) \approx 0.853$, while with access to the post-measurement state, the success probability becomes $p'=\frac{1}{2}\left(1+\sqrt{1-|\bra{00} ++\rangle|^2}\right)  \approx 0.933$, which constitutes a significant increase.

Below, we present the core of the proof of \cref{thm:main1}. The proof that optimal performance is attainable without entanglement is presented in \cref{subsec:without_ent}.

\begin{proof}
Consider two projective qubit measurements $\Mcal_1 = \{M_{1|1},M_{2|1}\} =\{\kebra{\psi},\kebra{\psi^\perp}\}$ and $\Mcal_2 = \{M_{1|2},M_{2|2}\} =\{\kebra{\phi},\kebra{\phi^{\perp}}\}$, and the task of optimally discriminating their respective Lüders' Instruments $\Lcal_1$ and $\Lcal_2$, occurring with probabilities $p_1$ and $p_2$. Using a tester $\Tcal$ to accomplish this task, we can write the probability of successful discrimination:

\begin{align}
    p_{\text{succ}} &= p_1 p(1|\map{C_L}_1,\Tcal) + p_2 p(2|\map{C_L}_2,\Tcal) \\&= p_1 \tr{{C_L}_1 T_1} +p_2\tr{{C_L}_2 T_2},
\end{align}

where $\map{C_L}_x:\Lcal(\hil{in}) \rightarrow \Lcal(\hil{out1}\otimes \hil{out2}) $, ${C_L}_x \in \Lcal(\hil{in}\otimes \hil{out1}\otimes \hil{out2})$ are the corresponding channels and their Choi operators:
\begin{align}
    {C_L}_x = \sum_{a=1}^2 M_{a|x}^T \otimes M_{a|x} \otimes \kebra{a}
\end{align}
where $.^T$ stands for transposition in the computational basis. 

Let us introduce the projectors $\pi_a = \id \otimes \id \otimes \kebra{a}$, for which we have the identity $\sum_a \pi_a{C_L}_x\pi_a = {C_L}_x$. From this follows that:
\begin{align}
    p(b,\map{C_L}_x|\Tcal) &= \tr{{C_L}_x T_b}  = \sum_a \tr{\pi_a{C_L}_x \pi_a T_b }\\ &= \sum_a \tr{{C_L}_x \pi_a T_b \pi_a}. 
\end{align}
In this way, for all $\map{C_L}_x$, given a tester $\Tcal \ne \pi(\Tcal)$,  we can always use $\pi(\Tcal)$ instead, while preserving all conditional probabilities, and since $\pi(\pi(\Tcal)) = \pi(\Tcal)$, all testers we now consider satisfy the relation $\Tcal =\pi(\Tcal)$. The most general form for $\Tcal$ we will be considering is thus:
\begin{align}
    T_b = \sum_a \sigma_{a|b} \otimes \kebra{a},
\end{align}

with operators $\sigma_{a|b} \in \Lcal(\hil{in}\otimes\hil{out1})$, $\sigma_{a|b}\geq 0$ and, following from the normalization condition on $\{T_b\}_b$, $\forall a$, we have:
\begin{align}
    \sum_b \sigma_{a|b} = \rho \otimes \id ,
\end{align}
where $\rho \in \Lcal(\hil{in})$ is a quantum state.

Furthermore, we introduce the transformations $\Gamma \otimes \Gamma \otimes \hat{X} (.)$ where $\hat{X}(.)$ is conjugation by the Pauli X operator and $\Gamma(A) = \frac{\tr{A}\id-A}{\dim(A)-1}$ is the universal NOT transformation, which is a positive and trace-preserving linear map~\cite{Buzek1999NOT,Rungta2001NOT}. For the dimension of $A$, $\dim(A)=2$, we have:
\begin{align}
    T'_b=\Gamma \otimes \Gamma \otimes \hat{X} (T_b) &= \sum_a \Gamma\otimes\Gamma (\sigma_{a|b}) \otimes \hat{X}\kebra{a}\hat{X} \\&= \sum_a \Gamma\otimes \Gamma(\sigma_{a+1|b})\otimes \kebra{a},
\end{align}
then:
\begin{align}
    p&(b,\map{C_L}_x|\Tcal')  = \text{Tr}({C_L}_xT'_b)  \\
    &=\sum_a \text{Tr}(\Gamma\otimes\Gamma (\sigma_{a+1|b})(M_{a|x}^T \otimes M_{a|x})) \\
    &=\sum_a \text{Tr}( (\sigma_{a+1|b})\Gamma\otimes\Gamma(M_{a|x}^T \otimes M_{a|x})) \\
    &=\sum_a \text{Tr}(\sigma_{a+1|b}(M_{a+1|x}^T \otimes M_{a+1|x})) \\
    &= p(b,\map{C_L}_x|\Tcal) ,
\end{align}
where we used the properties $\text{Tr}(\Gamma \otimes \Gamma (A)B) =\text{Tr}(\Gamma \otimes \Gamma(B)A)$ and $\Gamma(M_{a|x}) = M_{a+1|x}$. Using the same argument as earlier, for any tester $\Tcal$ that does not satisfy $\Tcal = \Gamma \otimes \Gamma \otimes \hat{X} (\Tcal)$, we can use the tester $\Gamma \otimes \Gamma \otimes \hat{X} (\Tcal)$ instead, which does while preserving all conditional probabilities. Thus we limit the set of testers we optimize over to ones that satisfy this property and therefore also meet:
\begin{align}
\label{noteq}
    \Gamma \otimes \Gamma (\sigma_{a+1|b}) = \sigma_{a|b}
\end{align}

Using this last equation, we can further restrict the normalization of $\{\sigma_{a|b}\}_b$, since:
\begin{align}
\label{equ: norm2}
    \sum_b \sigma_{a|b} &= \sum_b \Gamma\otimes\Gamma(\sigma_{a+1|b}) \\&=\rho \otimes \id = \Gamma \otimes \Gamma (\rho \otimes \id ) \\&= \frac{1}{2} \id\otimes\id,
\end{align}
given that $\rho=\frac{1}{2} \id$ is the only solution to $\rho = \Gamma(\rho)$. With the properties that $\Gamma \otimes \Gamma$ in $\dim(A)=2$ is self-adjoint and self-inverse, we can show that:
\begin{align}
   & p(b,\map{C_L}_x|\Tcal) \\&= \text{Tr}( (M_{1|x}^T\otimes M_{1|x})\sigma_{1|b} + (M_{2|x}^T\otimes M_{2|x})\sigma_{2|b}) \\&= \text{Tr}( (M_{1|x}^T\otimes M_{1|x})\sigma_{1|b} + (M_{1|x}^T\otimes M_{1|x}) \Gamma(\sigma_{2|b})) \\& = \text{Tr}(2\sigma_{1|b}(M_{1|x}^T\otimes M_{1|x})) \\&= \text{Tr}(N_b(M_{1|x}^T\otimes M_{1|x})) ,
\end{align}
where the operators $N_b = 2\sigma_{1|b}$ obey $N_b \geq 0$ and $\sum_b N_b = \id$ form by definition a POVM. And, for the discrimination of two projective qubit instruments, we now have:
\begin{equation}
\begin{split}
     p_{\text{succ}}& = p_1 \tr{N_1\kebra{\psi} \otimes \kebra{\psi}^T} \\&+ p_2 \tr{N_2\kebra{\phi} \otimes \kebra{\phi}^T}
\end{split},
\end{equation}
where $p_{\text{succ}}$ is the success probability of discriminating states $\kebra{\psi} \otimes \kebra{\psi}^T$ and $\kebra{\phi} \otimes \kebra{\phi}^T$ occurring with respective probabilities $p_1$, $p_2$, and using POVM $\{N_b\}_b$. For pure qubits, as is the case here, this is in turn equivalent to the discrimination of states $\kebra{\psi}^{\otimes 2}$ and  $\kebra{\phi}^{\otimes 2}$. 
\end{proof}

\subsection{Attaining maximal performance with entanglement} \label{subsec:with_ent}

In this subsection, we show how to attain the optimal performance presented in \cref{thm:main1} with entangled states, and in the following subsection, we show how entanglement is not required to attain the optimal performance.

An entanglement based strategy can be derived using elements from the proof of the theorem. This is done by first writing the normalization condition given in \cref{equ: norm2} as a normalization of the tester elements:

\begin{align}
    \sum_b T_b =\sum_{a,b} \sigma_{a|b} \otimes \kebra{a} = \frac{1}{2} \id\otimes\id\otimes\id.
\end{align}

We then use the equation \cite{Testers_Ziman_2008,Testers_Link_Product_Chiribella_2009}:
\begin{align}
    \sum_b T_b = \text{Tr}_\text{aux}\left(\rho\right)\otimes\id\otimes\id,
\end{align}
where $\rho$ is the bipartite input state and $\text{Tr}_\text{aux}$ is the partial trace over the auxiliary space, to write:
\begin{align}
\label{equ: ent in state}
    \text{Tr}_\text{aux}\left(\rho\right) = \frac{1}{2} \id
\end{align}
To reiterate the argument (see \cref{thm:main1}): For any tester that does not satisfy the above conditions, there exists one that achieves the same success probability, yet does. We can thus use \cref{equ: ent in state}, to find the input state $\rho$ of the optimal tester, which has to be a maximally entangled state as its partial trace is $\frac{1}{2} \id$. If one inputs the maximally entangled state$|\phi_2^+\rangle = \frac{1}{\sqrt2}\left(\ket{00}+\ket{11}\right)$, into a Lüders instrument (see \cref{InstDisc}), then for a measurement element $\kebra\psi$, direct calculation shows that we get the unnormalized total output state:
\begin{align} \label{eq:psi_psi_conj}
    \left(\kebra{\psi}\otimes\id\right)|\phi_2^+\rangle = \frac{1}{\sqrt{2}} \ket{\psi} \otimes \overline{\ket{\psi}},
\end{align}
where $\overline{\ket{\psi}}$ is the complex conjugate of $\ket{\psi}$ in the computational basis. One way to prove the identity of \cref{eq:psi_psi_conj} is to use the fact that for any $A:\mathcal{H}_{\text{in}} \to \mathcal{H}_{\text{out}}$, we have that
\begin{align}
    A \otimes \id_\text{in} \ket{\phi^+_{d_\text{in}}} = \sqrt{\frac{d_\text{in}}{d_\text{out}}} \id_\text{out} \otimes A^T \ket{\phi^+_{d_\text{out}}}.
\end{align}
In this way, if we view $\kebra{\psi}$ as the composition of the operator $\ket{\psi}:\mathbb{C}^d\to \mathbb{C}^1$ with $\bra{\psi}:\mathbb{C}^1\to \mathbb{C}^d$, and note that $\bra{\psi}^T=\overline{\ket{\psi}}$, we obtain \cref{eq:psi_psi_conj}.

At this point, we recognise that the problem of discriminating Lüders instruments characterized by POVMs $\{\kebra{\psi}, \kebra{\psi^\perp} \}$ and $\{\kebra{\phi}, \kebra{\phi^\perp} \}$ is equivalent to discriminating the states $\ket{\psi} \overline{\ket{\psi}}$ from $\ket{\psi^\perp} \overline{    \ket{\psi^\perp}}$. Now, we notice that the operator  
$U:=\overline{\ket{\psi}} \bra{\psi} + \overline{\ket{\psi^\perp}} \bra{\psi^\perp}$ is unitary. Hence, the pair of states $\{\ket{\psi}, \ket{\psi^\perp} \} $ is related to the  pair of states $\{\overline{\ket{\psi}}, \overline{\ket{\psi^\perp}} \} $ via the unitary operator $U$, hence both sets of states are unitarily equivalent, and discriminating $\ket{\psi} \overline{\ket{\psi}}$ from $\ket{\psi^\perp} \overline{\ket{\psi^\perp}}$ is equivalent to discriminating $\ket{\psi} {\ket{\psi}}$ from $\ket{\psi^\perp} \ket{{\psi^\perp}}$.

\subsection{Attaining maximal performance without entanglement} \label{subsec:without_ent}

We now show how to attain the maximum discrimination success probability without the use of entanglement. For that, we use a flipping symmetry argument similarly to \cref{thm:main1} to show that in the dichotomic case, `projective qubit instruments discrimination with no access to entanglement' is equivalent to `two-copy pure state discrimination with no access to entangling measurements'. Then, use the fact that optimal two-copy pure state discrimination is attainable without access to entangling measurements \cite{StateD-multiple-copy-state-discrimination-Ac_n_2005}. 

The most general strategy that uses no entanglement will simply be to input a state $\rho \in \Lcal(\hil{in})$ into the Lüders instrument $\Lcal_x$, then observe the classical outcome $a$ and apply a corresponding measurement $\Mcal'_a=\{M_{b|a}\}_{b=1}^{n_a'}$ to the post-measurement state. The maximum success probability for the discrimination of two Lüders instruments generated by two projective qubit measurements $\Mcal_{x\in \{0,1\}} = \{\kebra\psi_{a|x}\}_{a,x\in\{0,1\}}$, with prior probabilities $p_{a\in\{0,1\}}$ can be written as follows:

\begin{equation}
\begin{split}
   \max_{\rho,\{M_{b|a}\}} &p_s=\\
   \max_{\rho,\{M_{b|a}\}}
    &p_0\tr{\rho \kebra{\psi_{0|0}} } \tr{\kebra{\psi_{0|0}} M'_{0|0}} \\ 
    +&p_0\tr{\rho \kebra{\psi_{1|0}} } \tr{\kebra{\psi_{1|0}} M'_{0|1}}\\
    +&p_1\tr{\rho \kebra{\psi_{0|1}} } \tr{\kebra{\psi_{0|1}} M'_{1|0}}\\
    +&p_1\tr{\rho \kebra{\psi_{1|1}} } \tr{\kebra{\psi_{1|1}} M'_{1|1}}\\
\end{split}
\end{equation}
Using the property $\tr{\Gamma (A)B} = \tr{A\,\Gamma (B)}$ for $A,B$ 2 dimensional projectors, we can then rewrite the above expression:

\begin{equation}
\begin{split}
   \max_{\rho,\{M_{b|a}\}} &p_s=\\
   \max_{\rho,\{M_{b|a}\}}
    &p_0\tr{\rho \kebra{\psi_{0|0}} } \tr{\kebra{\psi_{0|0}} M'_{0|0}} \\ 
    +&p_0\tr{\Gamma(\rho) \kebra{\psi_{0|0}} } \tr{\kebra{\psi_{0|0}} M'_{1|1}}\\
    +&p_1\tr{\rho \kebra{\psi_{0|1}} } \tr{\kebra{\psi_{0|1}} M'_{1|0}}\\
    +&p_1\tr{\Gamma(\rho) \kebra{\psi_{0|1}} } \tr{\kebra{\psi_{0|1}} M'_{0|1}}\\
\end{split}
\end{equation}

One can then verify that the above expression is the same as the maximum success probability of sequentially discriminating pure states $\ket{\psi_{0|0}\psi_{0|0}}$ and $\ket{\psi_{0|1}\psi_{0|1}}$, with prior probabilities $p_0,p_1$, by sequentially using measurement $\{\rho,\Gamma(\rho)\}$ on the first copy then $\{M'_{0|0}, M'_{1|0}\}$ or $\{M'_{1|1}, M'_{0|1}\}$ on the second copy. (Notice here that we have flipped the indices of the measurement $\Mcal'_1$.)

The strategy that achieves this success probability, for the equal distribution case, will be to first input the optimal state for the discrimination of the two measurements without access to the post-measurement state and without entanglement. Explicitly, the pure state whose Bloch vector is the bisector of the two vectors corresponding to states $\ket{\psi_{0|0}}$ and $\ket{\psi_{1|1}}$. If the output is $0$, then apply the optimal measurement for discriminating $\ket{\psi_{0|0}}$ and $\ket{\psi_{0|1}}$ occurring with respective probabilities $p_M$ and $1-p_M$. If the the output is $1$, then apply the optimal measurement for discriminating $\ket{\psi_{1|1}}$ and $\ket{\psi_{1|0}}$ occurring with respective probabilities $p_M$ and $1-p_M$. We call $p_M$, the optimal success probability of discriminating measurements $\Mcal_0$ and $\Mcal_1$, with no access to the post-measurement state, as given in \cite{MeasD_Sedl_k_2014}.

\section{Main result 2: Large advantage of using the post-measurement state }
\label{UnboundedAdvantageForCertainTasks}

As proven in the previous section, having access to a post-measurement state allows us to outperform the strategies that do not take it into account, for the special case of projective qubit measurements. A natural question to ask is then, how large can this advantage be in the general case? In order to compare the performance of strategies that do consider the post measurement state to those that do not, we define the \textit{instrument advantage bias} as the ratio of the bias between the two optimal strategies' success probabilities.
\begin{align} \label{eq:delta_bias}
    \Delta(\Mcal_1 , \Mcal_2) :=& \frac{p_{s,\max}^{\Lcal}(\Mcal_1, \Mcal_2) - 0.5}{p_{s,\max}^{\Mcal}(\Mcal_1, \Mcal_2) - 0.5} \\
    =&\frac{d_L(\Mcal_1 , \Mcal_2)}{d_M(\Mcal_1 , \Mcal_2)},
\end{align}
where:
\begin{itemize}
    \item \( p_{s,\max}^{\Mcal}(\Mcal_1, \Mcal_2) \) denotes the optimal success probability for discriminating between the POVMs \( \Mcal_1 \) and \( \Mcal_2 \), occurring with equal probabilities.
    
    \item \( p_{s,\max}^{\Lcal}(\Mcal_1, \Mcal_2) \) denotes the optimal success probability for discriminating between the Lüders instruments \( \Lcal_1 \) and \( \Lcal_2 \), occurring with equal probabilities.
\end{itemize}
and  $d_L(.,.)$ and $d_M(.,.)$ are respectively the Lüders' distance and the post-measurement agnostic distance as defined in \cref{ludist} and \cref{eq:d_M}.

Using \cref{thm:main1}, we see that for a pair of projective qubit measurements occurring with equal probabilities, the advantage can be written:
\begin{align}
    \Delta(\Mcal_\psi , \Mcal_\phi) &= \frac{\sqrt{1-|\langle \psi |\phi \rangle|^4}}{\sqrt{1-|\langle \psi |\phi \rangle|^2}} \\ 
    &= \sqrt{1+|\langle \psi |\phi \rangle|^2} . 
\end{align}
Hence, we see that, $\Delta(\Mcal_\psi , \Mcal_\phi)\leq \sqrt{2}$, is an upper bound which is attained in the limit where $\braket{\psi}{\phi}=1$, that is, when the two measurements are the same.

In the following, we  prove that, when the pair of discriminated measurements are not restricted to sets of two rank-1 projectors, the instrument advantage bias can get arbitrarily large. 

\begin{theorem}
Even when considering qubit measurements, the instrument advantage bias (see~\cref{eq:delta_bias}) \( \Delta(\Mcal_\psi , \Mcal_\phi)  \) can be arbitrarily large. That is, for any positive real number \( L\in\mathbb{R}  \), there exist POVMs \(\Mcal_\psi, \Mcal_\phi\) such that \( \Delta(\Mcal_\psi , \Mcal_\phi)> L \).
\label{theo2}
\end{theorem}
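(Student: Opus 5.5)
The plan is to use the fact that the operator square root is not Lipschitz near zero. The Lüders channel $\map{L}_a(\rho)=\sqrt{M_a}\rho\sqrt{M_a}$ depends on $\sqrt{M_a}$, while the measure-and-prepare channel $\map{\Mcal}$ depends only linearly on $M_a$. A perturbation of order $\delta$ in an effect can therefore produce a change of order $\sqrt{\delta}$ in the Lüders instrument, provided it acts on a direction where the effect has a zero eigenvalue. This makes $d_L$ of order $\sqrt\delta$ while $d_M$ is of order $\delta$, so $\Delta \gtrsim 1/\sqrt{\delta}$ diverges as $\delta\to0$.

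Concretely, I would take the qubit POVMs
\begin{align}
\Mcal_1=\{\kebra0,\ \kebra1\},\qquad \Mcal_2=\{\kebra0+\delta\kebra1,\ (1-\delta)\kebra1\},
\end{align}
with $\delta\in(0,1)$. Both are valid POVMs. The proof has three steps.

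\emph{Upper bound on $d_M$.} Write $\map{M}_2-\map{M}_1=\sum_a \tr{(M_{a|2}-M_{a|1})\,\cdot\,}\kebra a$. The effect differences are $\pm\delta\kebra1$. Applying the triangle inequality to each term gives $d_M(\Mcal_1,\Mcal_2)\le\sum_a \|M_{a|2}-M_{a|1}\|_\infty\cdot 2 \le 4\delta$. The term $\tr{X\otimes\id(\rho)}$ is bounded by $\|X\|_\infty$ for any bipartite state, so an auxiliary system cannot help beyond this. The exact constant is irrelevant; one could even compute $d_M=2\delta$ directly, since both POVMs are diagonal.

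\emph{Lower bound on $d_L$.} The diamond norm is at least the trace norm obtained with the single, unentangled input $\rho=\kebra+$. The square roots are $\sqrt{M_{1|2}}=\kebra0+\sqrt\delta\kebra1$ and $\sqrt{M_{2|2}}=\sqrt{1-\delta}\kebra1$. The output for flag $\kebra1$ is
\begin{align}
\tfrac12\begin{pmatrix}1&\sqrt\delta\\ \sqrt\delta&\delta\end{pmatrix}
\end{align}
for $\Mcal_2$, versus $\tfrac12\kebra0$ for $\Mcal_1$. Their difference is $\frac12\begin{pmatrix}0&\sqrt\delta\\ \sqrt\delta&\delta\end{pmatrix}$, whose trace norm is $\frac12\sqrt{\delta^2+4\delta}\ge\sqrt\delta$. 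The flag $\kebra2$ block contributes a nonnegative amount, and the blocks with different classical flags are orthogonal, so their trace norms add. Hence $d_L\ge\sqrt\delta$.

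\emph{Conclusion.} Combining the two bounds gives $\Delta(\Mcal_1,\Mcal_2)=d_L/d_M\ge \sqrt\delta/(4\delta)=1/(4\sqrt\delta)$. Choosing $\delta<1/(16L^2)$ yields $\Delta>L$.

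The main thing to check carefully is the $d_M$ upper bound. I must make sure that entangled probes in the definition of the diamond norm of $\map{M}_2-\map{M}_1$ cannot amplify the order-$\delta$ difference. This follows because each term is a classical flag tensored with a functional of operator norm at most $\delta$. Alternatively, one can note that both measure-and-prepare channels are diagonal in the computational basis, so the optimal probe is a classical input.

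I would also remark that this example gives a sharper message than \cref{thm:main1}. The advantage comes from the coherence that the post-measurement state retains, in a regime where the classical outcome statistics are nearly identical.
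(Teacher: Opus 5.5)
Your proof is correct, and it uses the same construction as the paper: the pair $\Zcal$, $\Wcal^{p}$ (your $\delta$ is the paper's $p$), with $d_M$ of order $p$ and $d_L$ bounded from below by a single unentangled probe. The one substantive difference is the choice of probe, and your choice is the one that actually works.

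The paper inputs $\kebra{1}$. The block of the difference belonging to the outcome $\kebra{0}+p\kebra{1}$ is $\bigl(\begin{smallmatrix}0&\sqrt{p}\rho_{01}\\ \sqrt{p}\rho_{10}&p\rho_{11}\end{smallmatrix}\bigr)$. Its eigenvalues are $\frac12\bigl(p\rho_{11}\pm\sqrt{p^2\rho_{11}^2+4p|\rho_{01}|^2}\bigr)$; the paper's formula effectively has $\sqrt{p}\rho_{11}$ in place of $p\rho_{11}$. For a product input the trace norm is therefore $\sqrt{p^2\rho_{11}^2+4p|\rho_{01}|^2}+p\rho_{11}$. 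At $\kebra{1}$ this equals $2p$, which gives no advantage at all. This is clear physically: the post-measurement state is then always $\ket{1}$ and carries no information. The $\sqrt{p}$ term requires coherence $\rho_{01}\neq0$. Your $\ket{+}$ supplies exactly that, giving $d_L\geq\frac12\sqrt{p^2+4p}+\frac{p}{2}\geq\sqrt{p}$, in agreement with your block computation.

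Your $d_M$ bound is valid but loosely stated. For an entangled probe, the quantity to control is the operator $\mathrm{Tr}_{\mathrm{in}}[(X_a\otimes\id)\rho]$ on the auxiliary system, not a scalar trace. Writing $\|Y\|_1=\max_{-\id\leq U\leq\id}\mathrm{Tr}(UY)$ gives $\|\mathrm{Tr}_{\mathrm{in}}[(X_a\otimes\id)\rho]\|_1\leq\|X_a\|_\infty$. This yields $d_M\leq2\delta$ directly, matching the paper's exact value $d_M=2p$, so your extra factor $2$ is unnecessary. With the exact value one gets $\Delta\geq1/(2\sqrt{\delta})$, but your $1/(4\sqrt{\delta})$ already suffices for the theorem.
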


The proof of \cref{theo2} is constructive. For that, we consider the POVM $\Zcal=\{\kebra0 ,\kebra1\}$, and  the family of POVMs
\begin{align}
\Wcal^p:= \{\kebra0 +p\kebra1 ,(1-p)\kebra1\},
\end{align}
where $p\in[0,1]$. The POVM $W^p$ can be seen, as a noisy version of $\Zcal$, where for $p=0$, $W^{p=0}=\Zcal$, and when $p=1$, $W^{p=1}$ is a measurement that outputs $0$ regardless of the input state. We will prove that the instrument advantage bias $\Delta(\Zcal,\Wcal^p)$ gets arbitrarily large as $p\rightarrow 0$. 
It is also noteworthy that all POVM elements of the pair of measurements $\Zcal$ and $\Wcal^p$ are diagonal in the computational basis, hence they may be viewed as classical measurements. This shows that post-measurement states also enables large advantages in the classical regime.

\begin{proof}
Let us denote $\Wcal^p:= \{\kebra0 +p\kebra1 ,(1-p)\kebra1 \}$, which defines a measurement for all $p \in [0,1]$. Consider now the discrimination of $\Wcal^p$ and $\Zcal$ occurring with equal probabilities. We can derive the optimal probability of discriminating these measurement without access to the post-measurement state by first writing each of the measurements as a quantum channel and then solving for the corresponding channel discrimination task. For a measurement $\Mcal = \{M_a\}_{a=1}^n$, we will represent the corresponding quantum channel as follows:
\begin{align}
    \map{M}(\rho) = \sum_{a=1}^n \tr{M_a \rho} \kebra{a}
\end{align}

We will write the success probability of the channel discrimination strategy where we input state $\rho \in \Lcal\left(\hil{aux} \otimes \hil{in}\right)$, $\rho = \sum_{ijkl} \rho_{ikjl} \ketbra{i}{j}\otimes \ketbra{k}{l}$ and apply the optimal measurement for the given input state at the end, as the success probability of optimally discriminating states $\id \otimes \map{\Zcal} \left(\rho\right)$ and $\id \otimes \map{\Wcal^p} \left(\rho\right)$. We then have: 
\begin{equation}
    p_{s,\rho}^{\mathcal{M}}(\Zcal, \Wcal^p) = \frac{1}{2} + \frac{1}{4} \norm{(  \id\otimes\map{\Wcal^p})(\rho)-  (\id\otimes  \map{\Zcal})  (\rho) }_1,
\end{equation}

where $\norm{.}_1$ represents the trace norm or 1-norm. After replacing all terms with their corresponding expressions, the equation simplifies to:
\begin{align}
    p_{s,\rho}^{\mathcal{M}}(\Zcal, \Wcal^p)  &= \frac{1}{2} + \frac{1}{4} \norm{\sum_{i,j} p \rho_{i1j1} \ket{i}\bra{j} \otimes \hat{Z} }_1 \\&= \frac{1}{2} + \frac{p}{2} \norm{\sum_{i,j}\rho_{i1j1} \ket{i}\bra{j}}_1 ,
\end{align}
where $\hat{Z} = \kebra0-\kebra1$ is the Pauli Z matrix. One can here see, that  $\max_\rho p_{s,\rho}^{\mathcal{M}}(\Zcal, \Wcal^p)$ is achieved for any coherent superposition of input states $\kebra{01}$ and $\kebra{11}$, with $\max_\rho \norm{\sum_{i,j}\rho_{i1j1} \ket{i}\bra{j}}_1 =1$, and thus:
\begin{equation}
    \max_\rho  p_{s,\rho}^{\mathcal{M}}(\Zcal, \Wcal^p) = p_{s,\max}^{\mathcal{M}}(\Zcal, \Wcal^p) = \frac{1}{2} + \frac{p}{2}
\end{equation}

We now derive a lower bound on the optimal probability of discriminating $\Zcal$ and $\Wcal^p$ provided access to the post-measurement state by discriminating between the quantum channel representations $\map\Ical_\Zcal$ and $\map\Ical_{\Wcal^p}$ of their corresponding Lüders' instruments $\Ical_\Zcal$ and $\Ical_{\Wcal^p}$. We can write the probability of optimally discriminating the two channel with no access to entanglement and for an input state $\rho \in \Lcal(\hil{in})$, $\rho = \sum_{i,j} \rho_{ij} \ket i \bra j $ by applying the same method we did earlier:
\begin{equation}
    p_{s,\text{LOCC}\,\rho}^{\Ical}(\Zcal, \Wcal^p) = \frac{1}{2} + \frac{1}{4} \norm{\map\Ical_{\Wcal^p}(\rho)-  \map\Ical_\Zcal (\rho) }_1,
\end{equation}

We replace again all terms with their corresponding expressions and get the following expression:
\begin{equation}
\begin{split}
    p_{s,\text{LOCC}\,\rho}^{\Ical}(\Zcal, \Wcal^p) = \frac{1}{2} + \frac{1}{4} || p\rho_{11} |10\rangle\langle10|+ \sqrt{p} \rho_{01}|00\rangle\langle10| \\ + \sqrt{p} \rho_{10}|10\rangle\langle00|-p\rho_{11}|11\rangle\langle11| \,||_1,
\end{split}
\end{equation}

The operator inside the trace norm being hermitian (self-adjoint), we can compute it by summing over the absolute eigenvalues, $\norm{A}_1 = \sum_i |\lambda_i|$, which in our case are:
\begin{align}  
    \lambda_0&= 0 \\
    \lambda_1&=  \frac{\sqrt{p}}{2} \left( \rho_{11} - \sqrt{4\rho_{01}\rho_{10} +\rho_{11}^2}\right) \\
    \lambda_2&=  \frac{\sqrt{p}}{2} \left( \rho_{11} + \sqrt{4\rho_{01}\rho_{10} +\rho_{11}^2}\right) \\
    \lambda_3&= -p\rho_{11} 
\end{align}

Thus:
\begin{equation}
      p_{s,\text{LOCC}\,\rho}^{\Ical}(\Zcal, \Wcal^p) = \frac{1}{2} + \frac{1}{4}\left(\sqrt{p}\sqrt{4\rho_{01}\rho_{10} +\rho_{11}^2} +p\rho_{11} \right)
\end{equation}

By imputing the state $\kebra 1$, we can set a lower bound  $ p_{s,\text{LOCC}\,\kebra 1}^{\Ical}(\Zcal, \Wcal^p)$ on the maximum value of $ p_{s,\text{LOCC}\,\rho}^{\Ical}(\Zcal, \Wcal^p)$:
\begin{equation}
    p_{s,\text{LOCC}\,\kebra 1}^{\Ical}(\Zcal, \Wcal^p) = \frac{1}{2} + \frac{1}{4}\left( p + \sqrt{p}\right).
\end{equation}

We then have:
\begin{equation}
    \Delta(\Zcal,\Wcal^p) \geq \frac{p + \sqrt{p}}{2p} 
\end{equation}

Where for $p \rightarrow 0$, $\Delta \rightarrow \infty$, thus proving that for any $L\in \mathbb{R}$, there exists $\epsilon$, for which $\Delta(\Zcal,\Wcal^\epsilon) > L.$
\end{proof}

\section{Numerical investigation on more scenarios}

In this section we make use of the tester formalism presented in \cref{methods} to analyze further examples of discrimination tasks and illustrate the results obtainable using semidefinite programming. All code used for computing results shown in this section is openly available at~\cite{charbeleid7_github_measurement_discrimination}.

\subsection{Discrimination of a pair of trine measurements}
We start by analyzing the discrimination of the pairs of three outcome measurements defined by
\begin{equation}
    \Lambda \left( \theta, \phi \right) =  \left\{  \frac{1}{3} \left( \id + \vec{n}_j \cdot \vec{\sigma} \right) \right\}_{j=0,1,2},
\end{equation}
with $\vec{\sigma}=(\hat\sigma_X,\hat\sigma_Y,\hat\sigma_Z)$, the vector of Pauli matrices, and:
\begin{equation}
    \vec{n}_j =
    \begin{pmatrix}
        \cos\left(\theta + \frac{2\pi}{3}j \right) \\
        \sin\left(\theta + \frac{2\pi}{3}j\right) \cos\phi \\
        \sin\left(\theta + \frac{2\pi}{3}j\right) \sin\phi
    \end{pmatrix}.
\end{equation}

We identify here $\Lambda \left( 0, 0 \right)$ which is the set of three scaled projectors corresponding to the `trine states' $ \ket {\psi_j}= \frac{1}{\sqrt2} \left(\ket{0} + e^{\frac{2\pi i}{3} j }\ket1\right)$, $j=0,1,2$ lying equiangularly in the XY plane of the Bloch sphere. Noting that all other $\Lambda \left( \theta, \phi \right)$ correspond to Bloch rotations of these operators, we can take into account all possible pairwise discrimination tasks by discriminating $\Lambda \left( 0, 0 \right)$ and other elements $\Lambda \left( \theta, \phi \right)$. This does not lead to any loss of generality as any dichotomic discrimination of measurements $\Lambda \left( \theta, \phi \right)$ and $\Lambda \left( \theta', \phi' \right)$ can  be mapped to a discrimination of $\Lambda \left( 0, 0 \right)$ and $\Lambda \left( \theta'', \phi'' \right)$ via application of a global unitary. Sampling from $\Lambda(\theta,\phi)$ and using semidefinite programming, we can numerically solve the tester formulation of this problem and get the results shown in \cref{TrineMeas} and \cref{TrineInst} for the optimal success probabilities of the measurement and instrument discrimination tasks. We can additionally compute the instrument advantage for this task and obtain the graph in \cref{TrineIvM}.

 \begin{figure}[ht!]
    \centering
    \hspace{30pt}
    \includegraphics[width=0.5\textwidth]{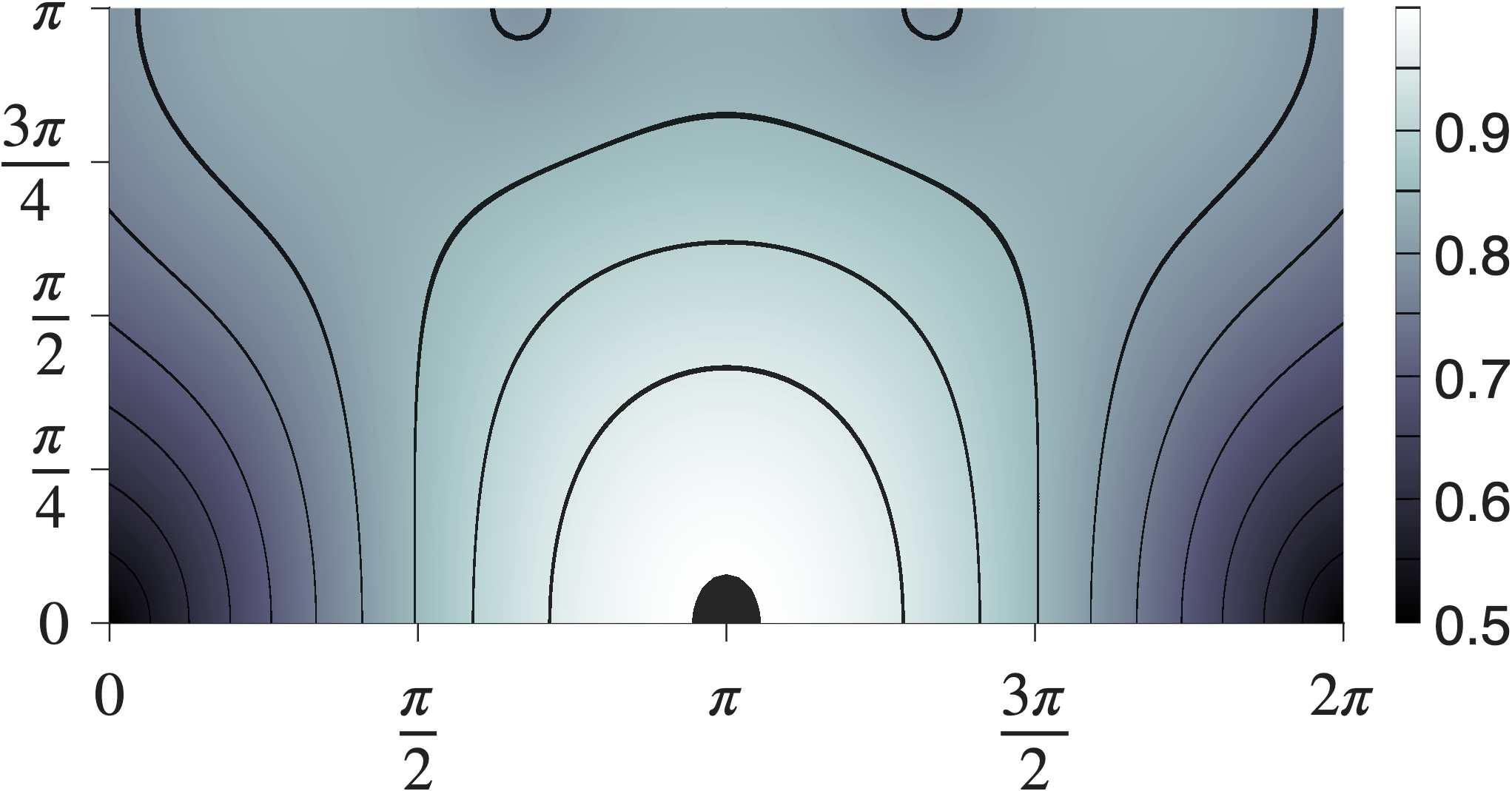}
    \caption{Success probability of discriminating measurements $\Lambda(0,0)$ and $\Lambda(\theta,\phi)$. On the vertical axis: $\phi$ values in radians. On the horizontal axis: $\theta$ values in radians. The color map represents the success probability with each of the small black separator lines covering a difference of 0.002 .}
    \label{TrineMeas}
\end{figure}

\begin{figure}[ht!]
    \centering
    \hspace{30pt}
    \includegraphics[width=0.5\textwidth]{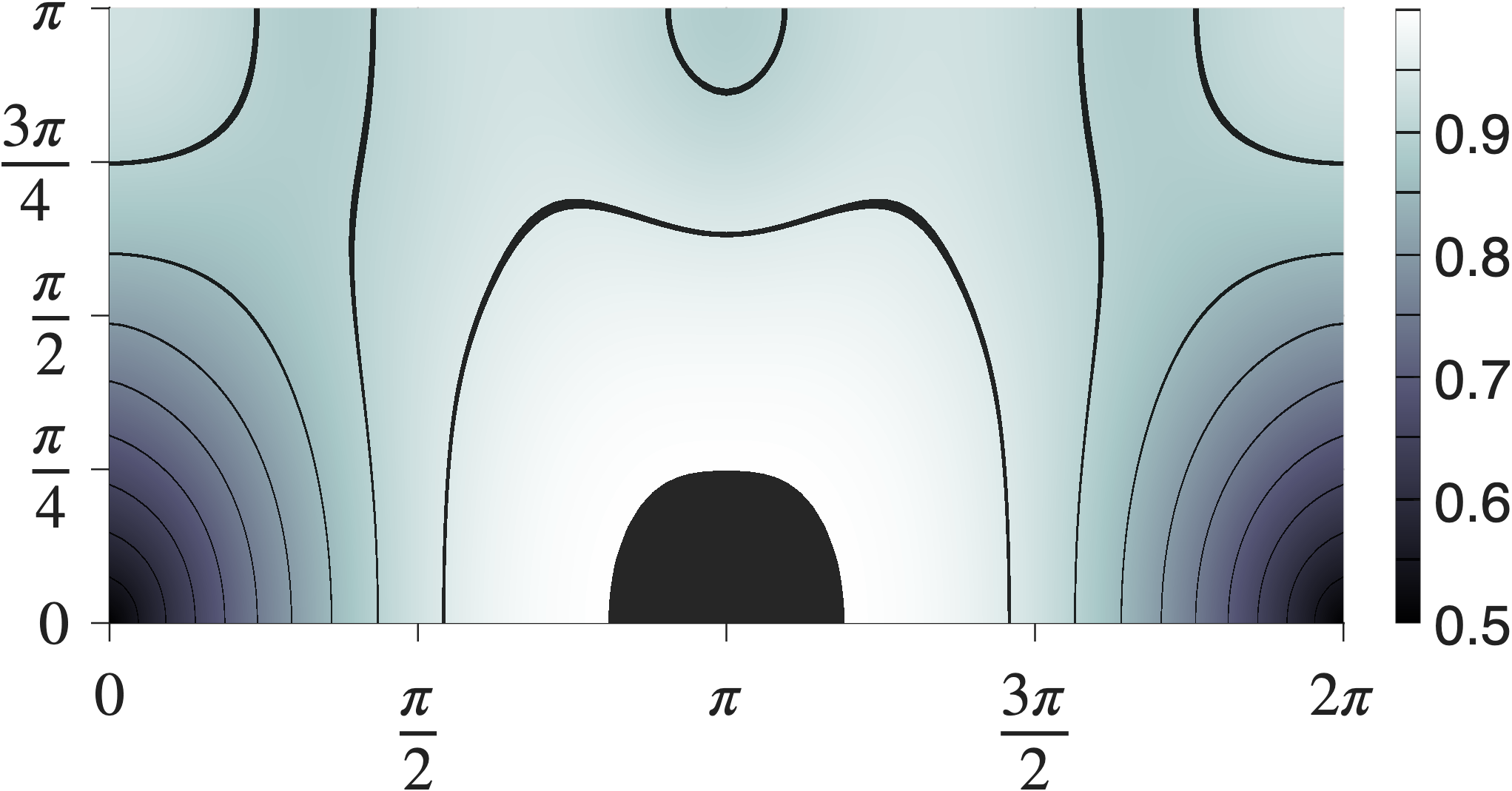}
    \caption{Success probability of discriminating Lüders' Instruments Corresponding to $\Lambda(0,0)$ and $\Lambda(\theta,\phi)$. On the vertical axis: $\phi$ values in radians. On the horizontal axis: $\theta$ values in radians. The color map represents the success probability with each of the small black separator lines covering a difference of 0.002 .}
    \label{TrineInst}
\end{figure}

\begin{figure}[ht!]
    \centering
    \hspace{30pt}
    \includegraphics[width=0.5\textwidth]{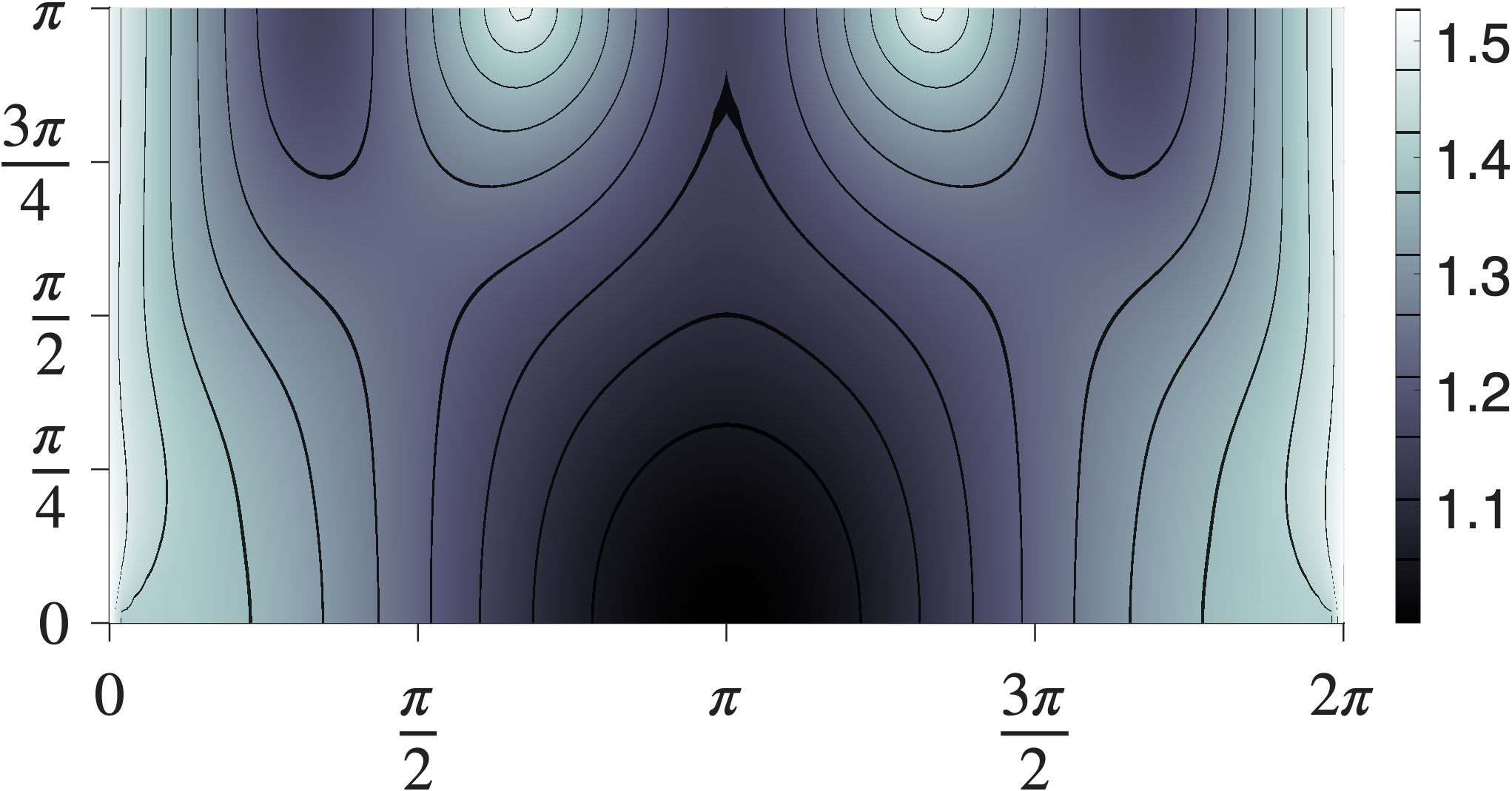}
    \caption{Instrument advantage for the task of discriminating $\Lambda(0,0)$ and $\Lambda(\theta,\phi)$. On the vertical axis: $\phi$ values in radians. On the horizontal axis: $\theta$ values in radians. The color map represents the instrument advantage with each of the small black separator lines covering a difference of 0.00165 .}
    \label{TrineIvM}
\end{figure}

\subsection{Discrimination of $\Zcal$ and $\Wcal^p$}
In the current subsection we will numerically illustrate the result obtained in \cref{theo2} as well as a few related examples. We define the measurements $\Wcal(\theta,p) := \{\kebra\theta +p\kebra{\theta^\perp} ,(1-p)\kebra{\theta^\perp} \}$, where $\ket{\theta} = \cos{\frac{\theta}{2}}\ket0 +\sin{\frac{\theta}{2}}\ket1$ and $\theta^\perp = \pi-\theta$ as rotations in the ZX plane of the measurement $\Wcal^p = \Wcal(0,p)$ considered in \cref{UnboundedAdvantageForCertainTasks}. This set of measurements can be interpreted as a model of error in the $\Zcal$ measurement, with the $p$ parameter representing noise or losses and the $\theta$ parameter representing basis misalignment in the measurement apparatus. We use semidefinite programming to discriminate $\Zcal$ and a sampling of elements from $\Wcal(\theta,p)$, which can be seen as a rudimentary model of error detection. The results for the measurement discrimination and instrument discrimination success probabilities are shown in \cref{WeakMeas} and \cref{WeakInst} respectively. The instrument advantage for the discrimination of $\Zcal$ and $\Wcal^p = \Wcal(0,p)$ is shown in \cref{WeakIvM} where the function can be seen to diverge.

\begin{figure}[ht!]
    \centering
    \hspace{30pt}
    \includegraphics[width=0.5\textwidth]{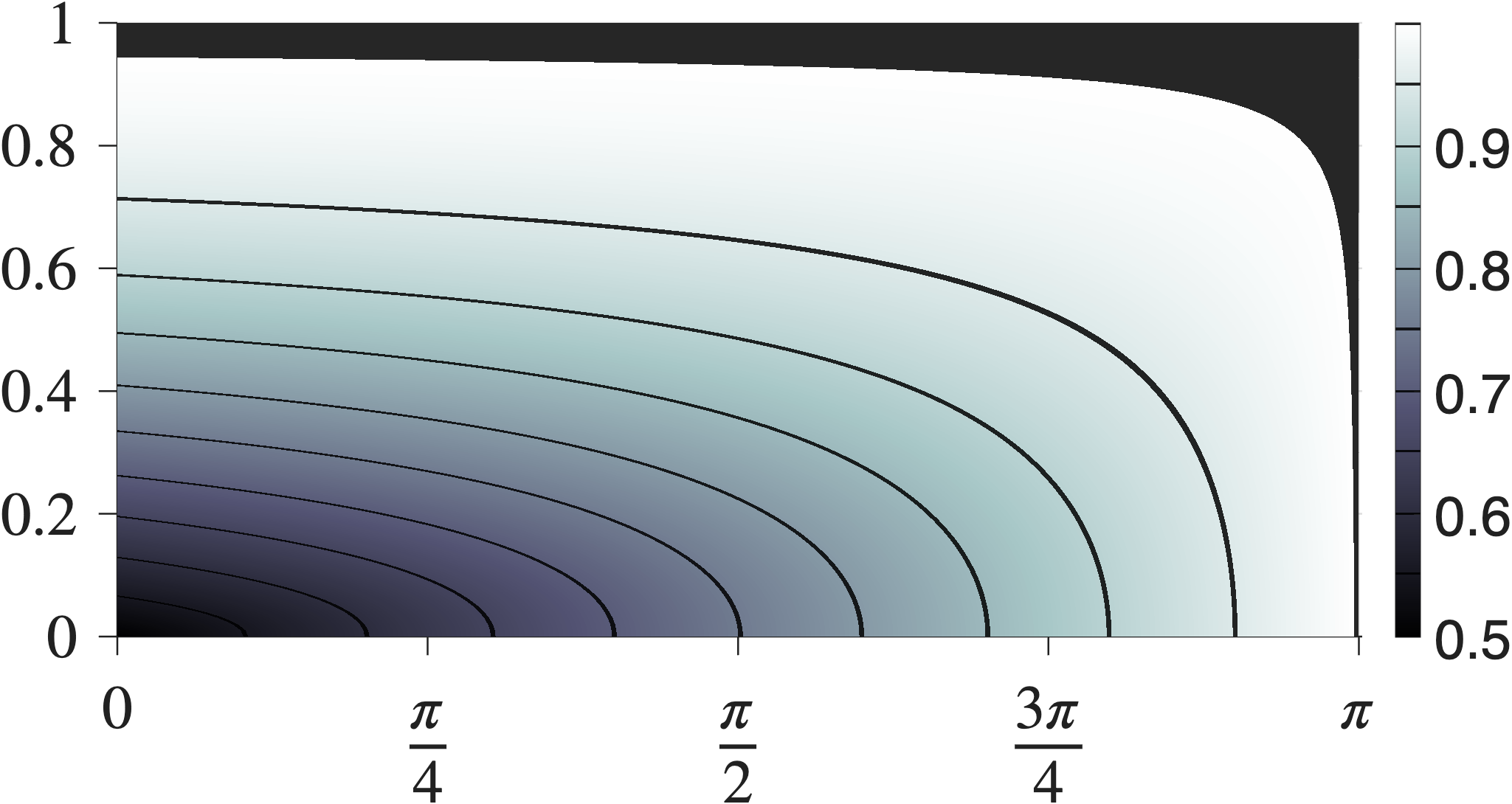}
    \caption{Success probability of discriminating measurements $\Zcal$ and $\Wcal(\theta,p)$. On the vertical axis: the parameter $p$. On the horizontal axis: $\theta$ values in radians. The color map represents the success probability with each of the small black separator lines covering a difference of 0.002 .}
    \label{WeakMeas}
\end{figure}

\begin{figure}[ht!]
    \centering
    \hspace{30pt}
    \includegraphics[width=0.5\textwidth]{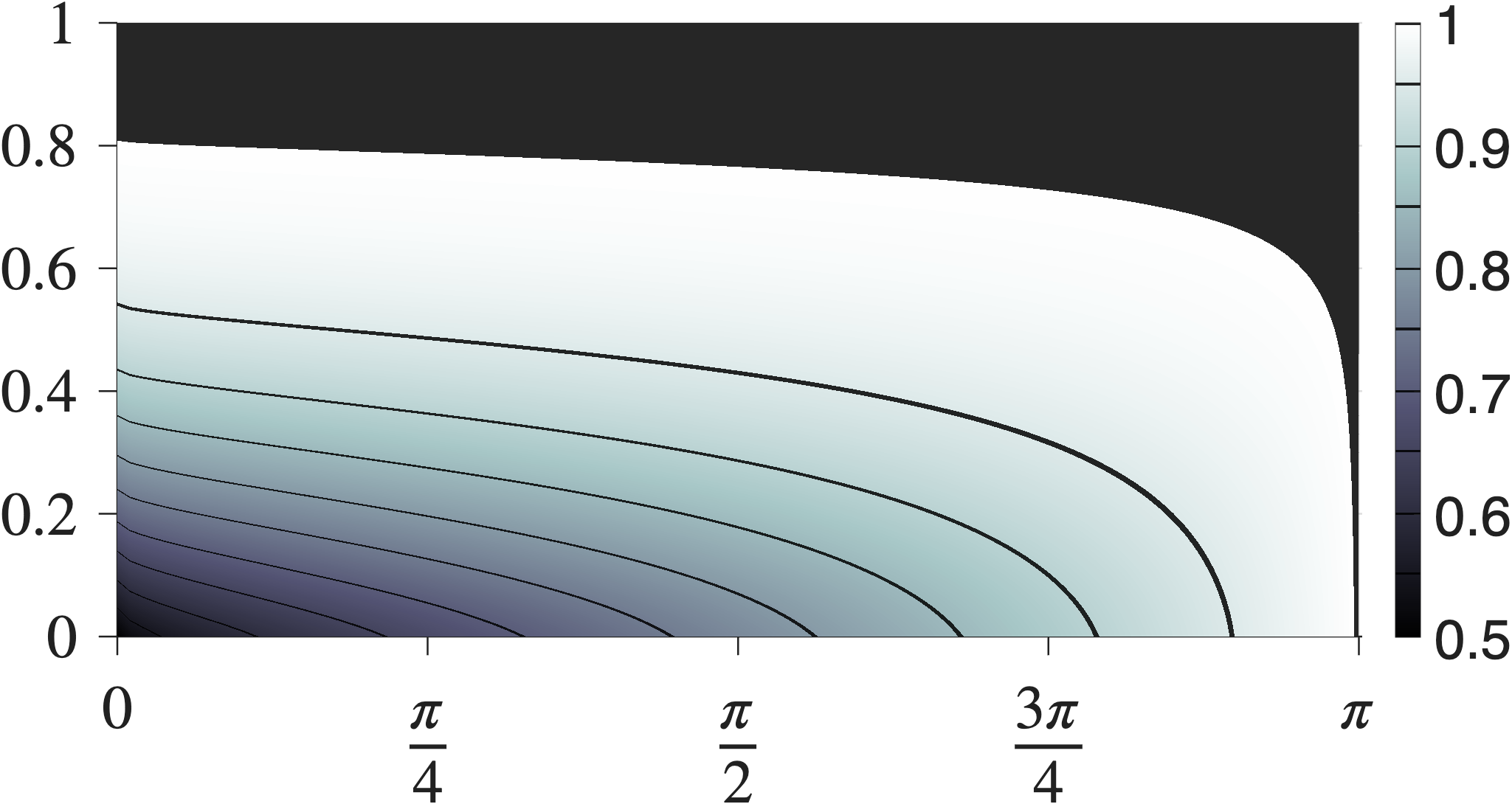}
    \caption{Success probability of discriminating Lüders' instruments corresponding to the measurements $\Zcal$ and $\Wcal(\theta,p)$. On the vertical axis: the parameter $p$. On the horizontal axis: $\theta$ values in radians. The color map represents the success probability with each of the small black separator lines covering a difference of 0.002 .}
    \label{WeakInst}
\end{figure}

\begin{figure}[ht!]
    \centering
    \hspace{30pt}
    \includegraphics[width=0.5\textwidth]{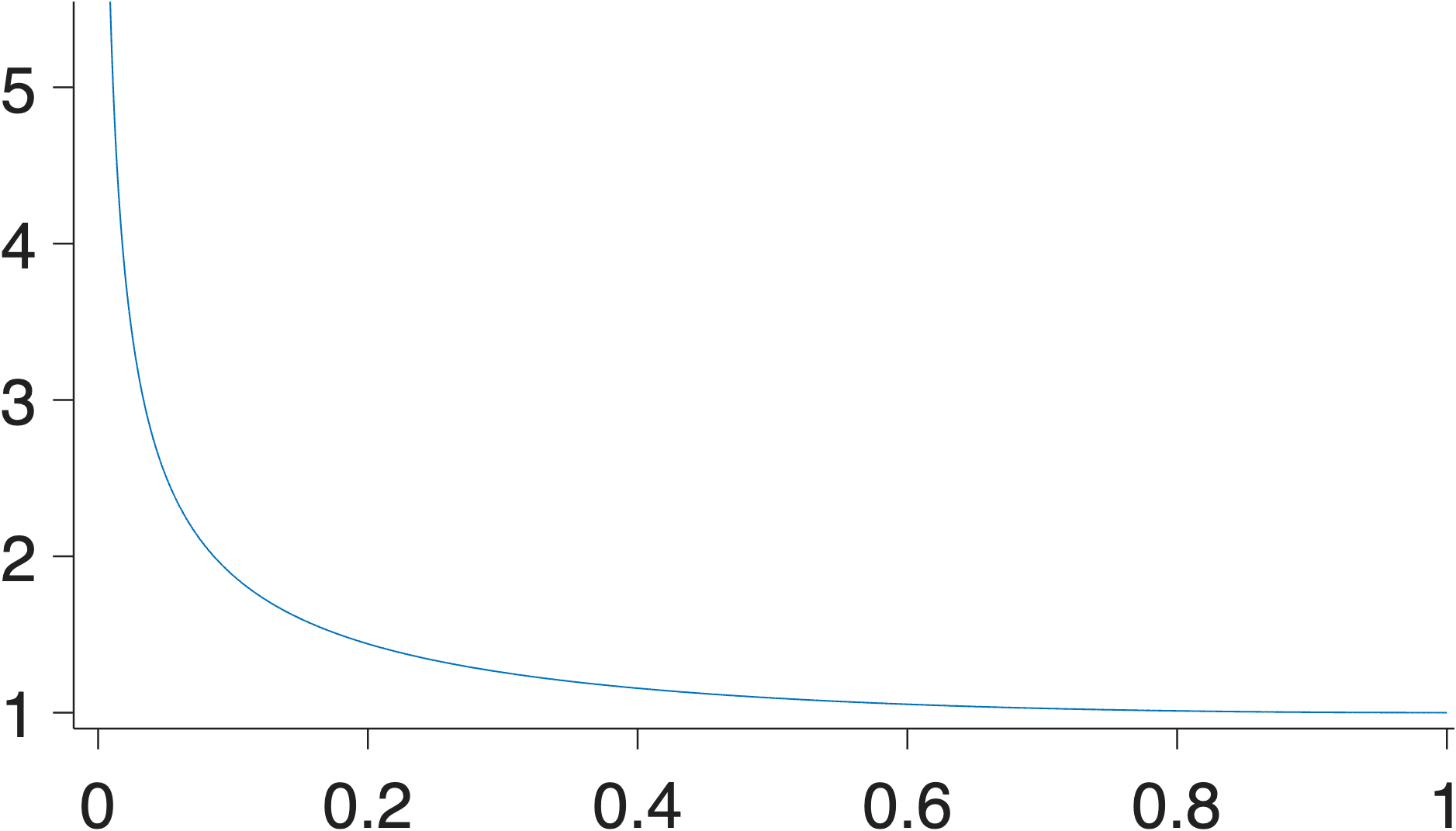}
    \caption{Instrument advantage for the discrimination of measurements $\Zcal$ and $\Wcal(0,p)$. On the vertical axis: the instrument advantage. On the horizontal axis: value of the parameter $p$. The function can be seen to diverge at 0.}
    \label{WeakIvM}
\end{figure}

\section{Discussion}

Motivated by the potential usefulness of having access to post-measurement quantum states when discriminating quantum measurements, a resource omitted by most of previous literature, we proposed the task of Lüders instrument discrimination, a measurement discrimination where the post-measurement is taken into account. We then investigate the problem of measurement discrimination with access to the post-measurement state, and adapt tools developed for quantum channels to investigate Lüders instrument discrimination from an analytical and computational perspective.

We start by showing that, in the simplest form of the problem, namely dichotomic qubit projective Lüders instrument discrimination, the problem is equivalent to two-copy pure state discrimination, extending the results from \cite{MeasD_with_post-measurement_Manna_2025}. This may be contrasted to the measurement discrimination counterpart that is equivalent to one-copy pure state discrimination~\cite{MeasD_Sedl_k_2014}, thus concluding that having access to the post-measurement state provides a significant advantage for the task. In addition to bounding the success probability, we discuss the possible strategies one can implement to optimally approach the problem, also showing that the maximum success probability can be achieved without the use of entanglement and proposing an optimal strategy that does not require it.

Then, we quantify and formalize the advantage gained in a dichotomic measurement discrimination task from having access to the post-measurement state. We do so by comparing the diamond distances between the pair of quantum channels associated to the measurement devices and the pair of channels corresponding to their Lüders instruments. We define the `Lüders Advantage Bias' as the ratio of these two quantities, and present an explicit qubit example where the advantage bias can get arbitrarily large for certain measurements, showcasing the significance of considering the post-measurement state when discriminating quantum measurements. 
Finally we employ the numerical tools developed for quantum channel discrimination to analyze the differences between measurement discrimination and Lüders discrimination tasks in scenarios that are harder to deal with analytically. Our computational code is openly available at ~\cite{charbeleid7_github_measurement_discrimination}.

\section*{Acknowledgments}
We thank Lucas Porto for useful discussions. 
MTQ is supported by the French Agence Nationale de la Recherche (ANR) under grant JCJC HOQO-KS.
This work was supported by the programme d'investissements d’excellence IDEX of the Alliance Sorbonne Université through the Quantum Information Center Sorbonne.


%
\end{document}